\newcommand{\review}[1]{ {\color{red} #1 }}
\newcommand\undermat[2]{%
  \makebox[0pt][l]{$\smash{\underbrace{\phantom{%
    \begin{matrix}#2\end{matrix}}}_{\text{$#1$}}}$}#2}
\newtheorem{theorem}{Theorem}
\newtheorem{lemma}{Lemma}
\begin{document}

\title{Rediscovering Bottom-Up: Effective Forecasting in Temporal Hierarchies}

\author{
  Lukas Neubauer \\
  TU Wien\\
  \texttt{lukas.neubauer@tuwien.ac.at} \\
   \And
 Peter Filzmoser \\
  TU Wien\\
  \texttt{peter.filzmoser@tuwien.ac.at} \\
}

\maketitle

\begin{abstract}
Forecast reconciliation has become a prominent topic in recent forecasting literature, with a primary distinction made between cross-sectional and temporal hierarchies. This work focuses on temporal hierarchies, such as aggregating monthly time series data to annual data. We explore the impact of various forecast reconciliation methods on temporally aggregated ARIMA models, thereby bridging the fields of hierarchical forecast reconciliation and temporal aggregation both theoretically and experimentally. Our paper is the first to theoretically examine the effects of temporal hierarchical forecast reconciliation, demonstrating that the optimal method aligns with a bottom-up aggregation approach. To assess the practical implications and performance of the reconciled forecasts, we conduct a series of simulation studies, confirming that the findings extend to more complex as well as possibly misspecified models. This result helps explain the strong performance of the bottom-up approach observed in many prior studies. Finally, we apply our methods to real data examples, where we observe similar results.
\end{abstract}

\keywords{Temporal Hierarchical Forecast Reconcilation, Temporal Aggregation, Bottom-Up}

\clearpage

\section{Introduction}

Forecast reconciliation has been a very popular topic in recent forecasting literature. It covers the questions on how to properly forecast time series which have been aggregated in a certain way. This aggregation could come from a cross-sectional aspect where a collection of time series is aggregated across different variables such as location or organizational unit. In contrast, the time series could also be aggregated on a temporal basis, such as monthly, quarterly, and annual time series. Naturally, both types of aggregation might be combined in any way, leading to cross-temporal hierarchies. 

The field of hierarchical forecast reconciliation investigates how to handle forecasting those hierarchies such that the resulting forecasts match the aggregation properties of the hierarchy. In addition, it is often examined how the performance of the reconciliation methods yielding so-called coherent forecasts is compared to original, possibly non-coherent forecasts. A very recent and extensive review of forecast reconciliation is given in \citet{ATHANASOPOULOS2024430}. Many extensions are discussed such as adding complex constraints (non-negativity, integer-based time series, ...) or probabilistic forecasting.

In this paper we investigate temporal hierarchies as introduced by \citet{ATHANASOPOULOS201760}. The authors argue that already existing forecast reconciliation methods can be applied to temporally aggregated time series in a straightforward manner. However, no further assumptions besides the base forecasts being unbiased are investigated, especially since no work is available looking at the theoretical implications of reconciliation methods assuming certain data-generating processes. We fill this gap of research and examine the performance of forecast reconciliation in temporal hierarchies in the theoretical framework of temporally aggregated time series models such as ARIMA models. 

The effects of temporal aggregation in autoregressive models were first studied by \citet{AmemiyaWu1972}. The authors prove that if some data is generated by an autoregressive model of order $p$, then a non-overlapping aggregate of these data will also follow a similar generating process. Namely, the autoregressive order of the aggregate remains at the same order $p$ while there might exist a moving average part of a certain order as well. In fact, the authors give a maximum order for this moving average part of the process. \citet{Silvestrini2005TEMPORALAO} give a generalized overview of this theory and extend it to general SARIMA models.

In temporal hierarchies, simple reconciliation techniques such as bottom-up approaches are often applied. A bottom-up forecast is generated by aggregating the forecasts of the disaggregated series. \citet{Ramrez2014InsightsIT} suggest that forecasts of aggregated time series can be improved by using bottom-up forecasts, as long as the aggregated model includes a significant moving average component. Without this component, the improvements may be minimal or nonexistent. In this work, we extend this analysis by considering more complex models and more intricate temporal hierarchies.

We take an additional step to analyze the performance of the bottom-up approach compared to more sophisticated reconciliation methods, thereby linking the fields of temporal forecast reconciliation and temporally aggregated time series models. Although this was experimentally examined in \citet{ATHANASOPOULOS201760}, the results have yet to be theoretically justified. In general, the connection between these two fields has not been established from a theoretical perspective.

The paper is structured as follows. In Section~\ref{sec:meth} we briefly discuss the ideas of hierarchical forecast reconciliation and recent advances (Section~\ref{sec:fcrecon}), in particular regarding temporal hierarchies (Section~\ref{sec:temp_fr}) as well as the basics of temporally aggregated time series models (Section~\ref{sec:tempagg}). This is followed by the linkage of the two topics in Section~\ref{sec:recon_tempagg} where we discuss the theoretical implications of forecast reconciliation on the temporally aggregated time series. In Section~\ref{sec:exps}, we investigate the discussed implications in various simulation studies, followed by real data applications in Section~\ref{sec:real_data}. Finally, we give concluding remarks in Section~\ref{sec:concl}.

\section{Related Work}\label{sec:meth}

\subsection{Hierarchical Forecast Reconciliation}\label{sec:fcrecon}

Forecast reconciliation aims to produce coherent forecasts that adhere to the aggregation structure of a time series hierarchy. The seminal work by \citet{hyndman:opt_fc_comb} introduced a regression-based approach, formulating it as a generalized least squares problem
\begin{align}\label{eq:reg}
    \hat{\mathbf y}_h = S\bm\beta_h+\bm\epsilon_h,
\end{align}
where $\hat{\mathbf y}_h$ are the base forecasts, $\bm\beta_h$ are the regression coefficients, and $\bm\epsilon_h$ is the reconciliation error with covariance matrix $V_h$. The generalized linear solution is $\hat{\bm\beta}_h=G_h\hat{\mathbf y}_h$ and reconciled forecasts $\tilde{\mathbf y}_h=SG_h\hat{\mathbf y}_h$, where $G_h = (S'V_h^{-1}S)^{-1}S'V_h^{-1}$.

\citet{wick:opt_fc_recon} proposed the minimum trace (minT) estimator based on the fact that $V_h$ is not identifiable, which minimizes the trace of the reconciled forecast error covariance matrix. Namely, 
\begin{align}\label{eq:mint}
    \min_{G} \text{tr}~\text{Cov}(\mathbf y_{T+h|h}-\tilde{\mathbf y}_h) = \min_G \text{tr}~SG W_h G'S',
\end{align}
subject to $SGS=S$, or equivalently, preserving unbiasedness of the forecasts. This leads to $G_h = (S'W_h^{-1}S)^{-1}S'W_h^{-1}$. This generalizes the regression-based solution of \eqref{eq:reg} and ensures forecast coherence as well as unbiasedness while minimizing forecast errors across all hierarchical levels.

A key challenge is estimating the covariance matrix of the base forecast errors, $W_h$, which is not always feasible, especially for complex hierarchies and long forecast horizons. To address this, researchers have proposed simpler estimators, such as equal weighting ($V_h=k_h I_n$), scaled reconciliation ($V_h=k_h\text{diag}(W_h)$) (see \citet{HYNDMAN201616}), sample and shrinkage estimators for $W_h$, and structural scaling ($W_h=k_h\text{diag}(S \mathbf 1_{n_b})$) (see \citet{wick:opt_fc_recon}).

The minimum trace method offers several advantages: it produces coherent forecasts, maintains unbiasedness if the base forecasts are unbiased, and enhances forecast performance by minimizing the overall forecast error variance. However, the potential improvements come with the challenge of accurately estimating the required covariance matrices, and \citet{PANAGIOTELIS2021343} argue that for some realizations, the performance of the reconciled forecasts may be worsened since the minT approach minimizes an expected loss function.

\subsection{Temporal Hierarchical Forecast Reconciliation}\label{sec:temp_fr}

Temporal hierarchies can be applied to forecast reconciliation as discussed by \citet{ATHANASOPOULOS201760}. This involves aggregating time series data at different levels (e.g., quarterly, biannual, annual), creating a hierarchical structure. A motivating example is given in Figure~\ref{fig:421_hier}. As previously described, the reconciliation process uses a regression-like approach, minimizing the trace of the forecast error covariance matrix. However, estimating this matrix is challenging due to limited samples at higher aggregation levels.

The fact that $\mathbf y_i = S\mathbf y_i^{[1]}$ suggests we can set up a very similar regression problem based on the base forecasts. The minimum trace approach then yields
\begin{align}
    \mathbf{\tilde y}_h = S(S'W_h^{-1}S)^{-1}S'W_h^{-1}\mathbf{\hat y}_h,
\end{align}
where $\mathbf{\hat y}_h$ are the stacked base forecasts across the entire hierarchy, and $W_h=\text{Cov}(\mathbf y_h-\mathbf{\hat y}_h)$ denotes the covariance matrix of the stacked base forecast errors.  Specifically, this means that on each aggregation level, we require $M_k h$ steps ahead forecasts, where $M_k$ denotes the frequency of aggregation level $k$. In the motivating example of Figure~\ref{fig:421_hier} this is $M_4 = 4, M_2 = 2, M_1 = 1$ and $h>0$ denotes the forecast horizon based on the top level of the hierarchy.

Researchers have proposed various simplified estimators. These include scaled reconciliation, similar to \citet{hyndman:opt_fc_comb}, and structural scaling, as in \cite{wick:opt_fc_recon}. \citet{NYSTRUP2020876} suggested autocorrelation-based methods such as autocovariance scaling, Markov scaling, GLASSO for inverse cross-correlation estimation, and cross-correlation shrinkage.
Further work by \citet{NYSTRUP20211127} explored dimension reduction techniques, using eigendecomposition of the cross-correlation matrix to create a filtered precision matrix. This approach is particularly useful for complex, deep hierarchies.
These methods aim to improve forecast accuracy by leveraging information across different temporal aggregation levels.

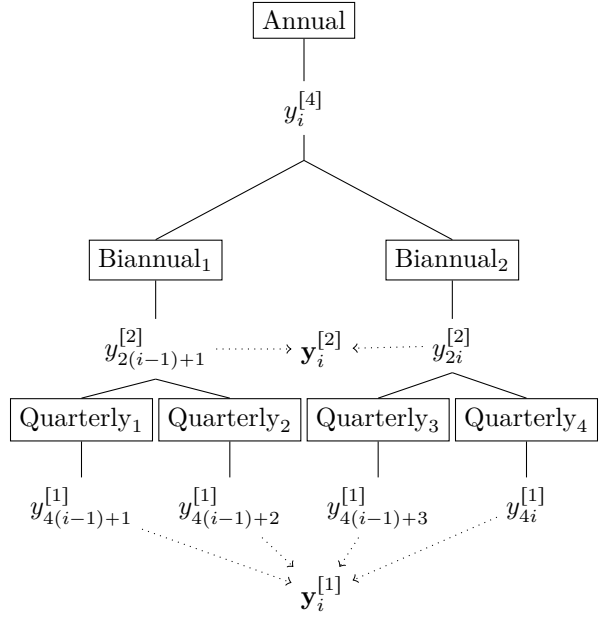
\begin{figure}[!ht]
    \centering
    \begin{tikzpicture}
        \tikzset{every tree node/.style={align=center,anchor=north}}
        \Tree [.\node[draw]{Annual}; 
            [.$y_{i}^{[4]}$ 
            [
            [.\node[draw]{$\text{Biannual}_1$};
                [.\node(bi1){$y_{2(i-1)+1}^{[2]}$};
                    [.\node[draw]{$\text{Quarterly}_1$}; \node(q1){$y_{4(i-1)+1}^{[1]}$}; ] 
                    [.\node[draw]{$\text{Quarterly}_2$}; \node(q2){$y_{4(i-1)+2}^{[1]}$}; ] 
                ] 
            ]
            [.\node[draw]{$\text{Biannual}_2$};
            [.\node(bi2){$y_{2i}^{[2]}$};
                    [.\node[draw]{$\text{Quarterly}_3$}; \node(q3){$y_{4(i-1)+3}^{[1]}$}; ] 
                    [.\node[draw]{$\text{Quarterly}_4$}; \node(q4){$y_{4i}^{[1]}$}; ] 
                ] ]
        ]]]

    \node[left = of bi2, right = of bi1] (vec2) {$\mathbf y_i^{[2]}$};
    \node[left = of bi2, right = of bi1, below = 1in of vec2] (vec1) {$\mathbf y_i^{[1]}$};

    \begin{scope}[dotted]
        \draw [->] (bi1)--(vec2);
        \draw [->] (bi2)--(vec2);
        \draw [->] (q1)--(vec1);
        \draw [->] (q2)--(vec1);
        \draw [->] (q3)--(vec1);
        \draw [->] (q4)--(vec1);
    \end{scope}
    
    \end{tikzpicture}
    \caption{Visualization of an annual-biannual-quarterly temporal hierarchy. }
    \label{fig:421_hier}
\end{figure}

\subsection{Temporal Aggregation}\label{sec:tempagg}

The temporal aggregation of time series was first explored in the work of \citet{AmemiyaWu1972}. A more recent review of key advancements can be found in \citet{Silvestrini2005TEMPORALAO}, where ARIMA-based models are primarily discussed. 

Given a univariate series $y=(y_t, t=1,2,\dots)$, a $k$-aggregated series $y^\ast$ is defined as:
\[
y_t^\ast = \sum_{i=0}^k w_{i} y_{t-i}.
\]
For non-overlapping aggregates, the new time scale is \(T=kt\), so \(y^\ast\) represents a lower-frequency series observed every $k$ steps.

The most common aggregation method is flow aggregation (with \(w_i=1\)), which sums the values over each period. Another approach is stock aggregation, where only the last observation (\(w_0=1\)) is used. Like most literature, we focus on flow aggregation.

If the higher-frequency series \(y\) follows an $\text{ARIMA}(p,d,q)$ model, then the aggregated series \(y^\ast\) is also of the ARIMA family with
\begin{align}\label{eq:agg_arima}
    y^\ast \sim \text{ARIMA}(p,d,r),\quad r\leq \left\lfloor\frac{p(k-1)+(d+1)(k-1)+q}{k}\right\rfloor.
\end{align}
The autoregressive and integrated orders remain unchanged, but the moving average order increases. The AR roots of the aggregated model are the $k$-th power of the original AR roots, so as aggregation increases, the AR effect diminishes, while the MA effect grows. Calculating MA coefficients is complex, requiring non-linear equations involving autocorrelations and other parameters.

This theory extends to more complex models like ARIMAX, SARIMA, and even volatility models like GARCH. Polynomial cancellations in disaggregated models can simplify aggregated models, as demonstrated in the case of an $\text{AR}(9)$ model reducing to $\text{AR}(3)$ after aggregation (\citet{Ramrez2014InsightsIT}).

\citet{Ramrez2014InsightsIT} also explore forecasting with aggregated series, showing that forecasting from disaggregated data in terms of a bottom-up forecast can reduce errors of the aggregated data if the MA part is significant, though the improvement is minimal when it is not.

A detailed calculation for an $\text{AR}(1)$ model is provided in Appendix \ref{app:calc_proofs}.

\section{Temporal Hierarchical Forecast Reconciliation in Temporally Aggregated Models}\label{sec:recon_tempagg}

In this section, we will theoretically integrate the fields of temporal forecast reconciliation and temporally aggregated ARIMA models. To the best of our knowledge, this is the first time such an integration has been attempted. While \citet{ATHANASOPOULOS201760} utilized the theory of temporally aggregated ARIMA models, their approach was primarily experimental. They examined the performance of temporal forecast reconciliation methods, such as variance scaling, and compared them to a simple bottom-up approach under varying levels of uncertainty. Specifically, they conducted experiments with fixed model orders and parameters, fixed orders alone, or automatically selected models based on model selection criteria. The authors found that temporal forecast reconciliation and bottom-up methods perform equally well in highly certain settings, but the performance of bottom-up methods declines when models are misspecified.

In general, the data-generating process has not been of much interest so far in the field of temporal forecast reconciliation because it has been developed as a post-hoc procedure to transform base forecasts coherently and later to improve possibly suboptimal base forecasts originating from misspecified models. In the theory of temporally aggregated models, the combination of forecasts of different levels to achieve coherent or even better forecasts has not been looked at.

Our contribution is as follows. Utilizing the theoretical model of aggregation, we will derive the theoretical covariance matrix of the base forecast errors, denoted as $W$, given in Lemma~\ref{lm:agg_cov}. This covariance matrix will then be employed to perform the minimum trace estimation manually. Through matrix algebra, we will demonstrate in Theorem~\ref{thm:mint_bu} that the resulting mapping matrix $G$ corresponds to a bottom-up forecast. Consequently, we show that within the framework of aggregated ARIMA models, the optimal forecast reconciliation technique is indeed the bottom-up approach.

Building on the insights from Section~\ref{sec:tempagg}, we aim to manually implement the minimum trace reconciliation method. To do this, we need the covariance matrix of the base forecast errors, which we can readily compute. To maintain simplicity, we will initially focus on the straightforward case of an $\text{AR}(1)$ model and subsequently discuss more complex models. The first result in Lemma~\ref{lm:agg_cov} is about the covariance structure of the aggregated model. Its proof can be found in \ref{app:calc_proofs}.

\begin{lemma}\label{lm:agg_cov}
    The covariance matrix $W_1$
    of $1$-step forecast errors in a $k$-aggregated $\text{AR}(1)$ model with parameter $\phi$ and innovation variance $\sigma^2$
    is equal to
    \begin{align}
        W_1 &= \begin{pmatrix}
            \sigma_{\ast}^2 & \sigma^2 \mathbf 1_k'\Phi\Phi' \\
            \sigma^2 \Phi\Phi'\mathbf 1_k & \sigma^2 \Phi\Phi'
        \end{pmatrix}
    \end{align}
    where $\mathbf 1_k$ denotes a vector of ones of length $k$, $\Phi$ is a lower triangle matrix given by
    \begin{align}
        \Phi &= \begin{pmatrix}
            1 & 0 & 0 &\dots & 0 \\
            \phi & 1 & \ddots & \ddots & \vdots\\
            \vdots & \ddots & \ddots & \ddots & \vdots\\
            \phi^{k-2} & \ddots & \ddots & \ddots & 0 \\
            \phi^{k-1} & \phi^{k-2} & \dots &\phi & 1
        \end{pmatrix},
        \end{align}
    and $\sigma_{\ast}^2$ denotes the innovation variance of the aggregated model.
\end{lemma}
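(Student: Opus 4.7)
The plan is to compute $W_1$ block by block by unrolling the AR(1) recursion. Writing $\boldsymbol{\epsilon} = (\epsilon_{T+1},\ldots,\epsilon_{T+k})'$ for the stacked bottom innovations over the next aggregation period, the optimal $h$-step forecast of an AR(1) is $\hat{y}_{T+h|T} = \phi^h y_T$, so the $h$-step forecast error equals $\sum_{i=0}^{h-1}\phi^i \epsilon_{T+h-i}$ for $h=1,\ldots,k$. Stacking these over $h$, the bottom-level forecast error vector equals $\Phi\boldsymbol{\epsilon}$ with $\Phi$ as in the statement, which immediately produces the lower-right block $\mathrm{Cov}(\Phi\boldsymbol{\epsilon}) = \sigma^2 \Phi\Phi'$.

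For the top-left block, by definition the optimal $1$-step forecast error of the aggregated process coincides with the aggregated innovation, whose variance is $\sigma_\ast^2$. This is just the statement that the ARMA representation of the $k$-aggregate, referenced from Section~\ref{sec:tempagg}, has innovation variance $\sigma_\ast^2$, so no extra work is required here.

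The substantive step is the off-diagonal block. Let $e^*_{T+1} = y^*_{T+1} - \hat{y}^*_{T+1|T}$, where $\hat{y}^*_{T+1|T}$ is produced from the aggregated process and hence is a function of innovations $\epsilon_s$ with $s\leq T$ only. Since $y^*_{T+1} = \mathbf{1}_k' \mathbf{y}_{T+1,\ldots,T+k}$, substituting the AR(1) recursion yields $y^*_{T+1} = R_T^{(1)} + \mathbf{1}_k'\Phi\boldsymbol{\epsilon}$, where $R_T^{(1)}$ depends only on innovations up to time $T$. Therefore $e^*_{T+1} = R_T + \mathbf{1}_k'\Phi\boldsymbol{\epsilon}$ for a past-measurable $R_T$ that is uncorrelated with $\boldsymbol{\epsilon}$. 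Taking covariance against the bottom-error vector $\Phi\boldsymbol{\epsilon}$ gives
\begin{align*}
\mathrm{Cov}(e^*_{T+1},\Phi\boldsymbol{\epsilon}) = \mathrm{Cov}(\mathbf{1}_k'\Phi\boldsymbol{\epsilon},\Phi\boldsymbol{\epsilon}) = \sigma^2\mathbf{1}_k'\Phi\Phi',
\end{align*}
which is precisely the top-right block, and transposition delivers the bottom-left block, assembling all four blocks of $W_1$ as claimed.

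The main obstacle is the bookkeeping that isolates the future-innovation component $\mathbf{1}_k'\Phi\boldsymbol{\epsilon}$ inside $e^*_{T+1}$ while absorbing everything else into a past-measurable remainder; once this is done, orthogonality of disjoint innovations finishes the cross term. A secondary subtlety worth flagging is that $\sigma_\ast^2 \neq \sigma^2 \mathbf{1}_k'\Phi\Phi'\mathbf{1}_k$ in general — the former is the variance of the aggregated-model error, the latter the variance of the bottom-up aggregated error — and this information-loss gap is exactly the structural feature that will be exploited when the block form of $W_1$ is fed into the minimum-trace formula in Theorem~\ref{thm:mint_bu}.
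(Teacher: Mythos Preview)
Your proposal is correct and follows essentially the same approach as the paper: both compute the three blocks separately by unrolling the AR(1) recursion, identify the bottom-level error vector with $\Phi\boldsymbol{\epsilon}$, and take the top-left entry as $\sigma_\ast^2$ by definition of the aggregated innovation. Your treatment of the cross block is slightly cleaner---you decompose $e^*_{T+1}$ into a past-measurable remainder plus $\mathbf 1_k'\Phi\boldsymbol{\epsilon}$ and invoke orthogonality once, whereas the paper computes $\mathrm{Cov}(\epsilon^\ast_{T+1},\epsilon_{tk+j})$ entry by entry via geometric sums and then assembles the row---but this is a presentational streamlining rather than a different idea.
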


Based on Lemma~\ref{lm:agg_cov} we now manually compute the optimal unbiased reconciliation matrix, summarised in Theorem~\ref{thm:mint_bu}. The proof is available in \ref{app:calc_proofs}.

\begin{theorem}\label{thm:mint_bu}
    The minimum trace reconciliation method in a $k$-aggregated $\text{AR}(1)$ model is equal to a bottom-up approach, implying that
    \begin{align*}
        SG^\ast = \begin{pmatrix}
            0 & \mathbf 1_k' \\
            \mathbf 0_k & \mathbf I_k
        \end{pmatrix},
    \end{align*}
    where $\mathbf 0_k$ is a vector of zeros of length $k$, $\mathbf I_k$ is the $k\times k$ identity matrix, and $G^\ast$ denotes the optimal mapping matrix from problem (\ref{eq:mint}).
\end{theorem}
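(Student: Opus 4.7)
\emph{Proof plan.} Since the summing matrix of the two-level aggregated hierarchy is $S = \begin{pmatrix}\mathbf 1_k' \\ I_k\end{pmatrix}$, a direct block multiplication shows that the target matrix factors as $\begin{pmatrix} 0 & \mathbf 1_k' \\ \mathbf 0_k & I_k \end{pmatrix} = S\,(\mathbf 0_k,\, I_k)$. Thus the theorem reduces to showing that the optimiser of (\ref{eq:mint}) is $G^\ast = (\mathbf 0_k,\, I_k)$, i.e.\ the bottom-up selector. I plan to compute $G^\ast = (S' W_1^{-1} S)^{-1} S' W_1^{-1}$ directly from Lemma~\ref{lm:agg_cov} via block inversion.

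\emph{Inverting $W_1$.} Abbreviate $A := \sigma^2 \Phi\Phi'$, so that Lemma~\ref{lm:agg_cov} reads $W_1 = \begin{pmatrix} \sigma_\ast^2 & \mathbf 1_k' A \\ A \mathbf 1_k & A \end{pmatrix}$. The decisive structural feature is that the off-diagonal block factors as $A\mathbf 1_k$, i.e.\ it is the matrix $A$ acting on exactly the summing vector $\mathbf 1_k$ that appears in $S$. With Schur complement $s := \sigma_\ast^2 - \mathbf 1_k' A \mathbf 1_k$ (positive whenever $W_1$ is non-singular), the standard block-inverse formula yields
\[
W_1^{-1} = \begin{pmatrix} 1/s & -\mathbf 1_k'/s \\ -\mathbf 1_k/s & A^{-1} + \mathbf 1_k \mathbf 1_k'/s \end{pmatrix}.
\]

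\emph{The cancellation.} Multiplying on the left by $S' = (\mathbf 1_k,\, I_k)$, the top-row $\mathbf 1_k/s$ cancels the bottom-row $-\mathbf 1_k/s$ in the first column, while the rank-one piece $\mathbf 1_k\mathbf 1_k'/s$ in the $(2,2)$ block cancels $-\mathbf 1_k\mathbf 1_k'/s$ in the remaining columns, leaving $S' W_1^{-1} = (\mathbf 0_k,\, A^{-1})$. One further multiplication by $S$ contracts this to $S' W_1^{-1} S = A^{-1}$, and hence $G^\ast = A \cdot (\mathbf 0_k,\, A^{-1}) = (\mathbf 0_k,\, I_k)$ as desired.

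\emph{Main obstacle.} The arithmetic itself is short; the substance lies in the structural observation that the cross-covariance block of $W_1$ is $A\mathbf 1_k$, with precisely the $\mathbf 1_k$ encoding the aggregation in $S$. This alignment between the forecast-error covariance and the hierarchy is exactly what forces the two $\mathbf 1_k/s$ terms and the two $\mathbf 1_k\mathbf 1_k'/s$ terms to cancel. The non-trivial difficulty, which one inherits from Lemma~\ref{lm:agg_cov} rather than confronts within this proof, is establishing that this alignment actually holds for the aggregated AR(1) model; any data-generating process producing a cross-covariance that fails to factor as $A\mathbf 1_k$ would break the cancellation and invalidate the bottom-up conclusion.
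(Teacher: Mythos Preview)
Your proof is correct and follows essentially the same route as the paper. The paper states directly that $W_1^{-1}S = \begin{pmatrix}\mathbf 0_k' \\ (\sigma^2\Phi\Phi')^{-1}\end{pmatrix}$ ``due to'' the form of the cross-covariance, whereas you arrive at the transposed relation $S'W_1^{-1} = (\mathbf 0_k,\,A^{-1})$ by first writing out $W_1^{-1}$ via the Schur complement and then multiplying; the cancellation you exhibit is precisely the content the paper leaves implicit, so your version is the more detailed of the two.
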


Theorem~\ref{thm:mint_bu} indicates that the optimal unbiased reconciliation method for the aggregated $\text{AR}(1)$ model is the bottom-up approach. Consequently, the forecasts at the bottom level remain unchanged, with no potential for enhancing forecast accuracy. Conversely, the aggregated forecast is disregarded in any form of combination. This outcome elucidates why the bottom-up approach frequently demonstrates effectiveness in both simulation studies and real-world data applications, thus bolstering its practicality.

Before moving on to the experimental part of this study, we aim to illustrate how this theorem works using a sample-based approach. In Figure~\ref{fig:ar1_SG_2}, the average transformation matrix $SG$ for a two-level hierarchy is presented. To do this, we simulated 100 models and estimated the complete sample covariance matrix based on the simulations. The models used consist of an $\text{AR}(1)$ model with parameters $\phi=0.8,\sigma^2=1$ at the lower level, which is then combined into an $\text{ARMA}(1,1)$ model at the higher level of the hierarchy with $k\in\{4,1\}$. The nodes of the hierarchy are shown on both axes, with $1-1$ representing the entry at the top level and $2-i$ representing the $i$-th step of the lower level. This precisely specifies the transformation matrix as used in Theorem~\ref{thm:mint_bu}. The first row shows the effects of the base forecasts on the reconciled top level forecast. It is evident that there is little impact from the top level base forecast, with nearly equal weights close to 1 for the bottom level base forecasts. Similarly, the following $4$ rows demonstrate the weights for the reconciled bottom level forecasts, with a zero column followed by the identity matrix $I_4$. This indicates that the reconciled bottom level forecasts closely match the bottom level base forecasts. In summary, the tendency for a bottom-up reconciliation approach is clear.

\begin{figure}[!ht]
    \includegraphics[width=\textwidth]{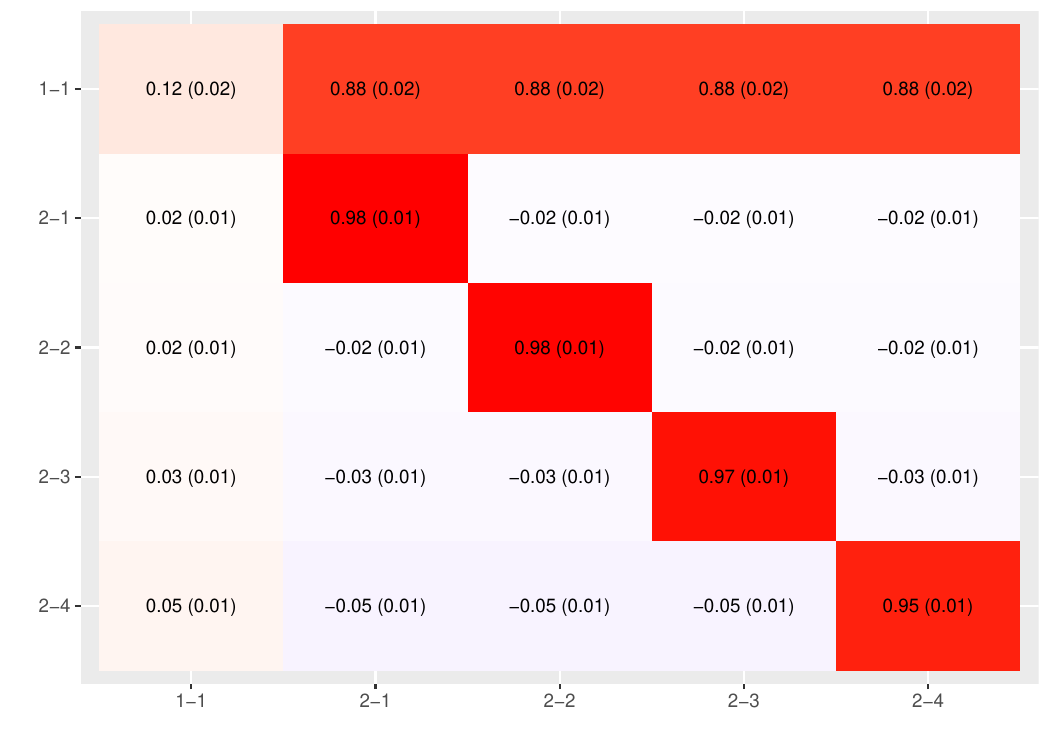}
    \caption{Full sample transformation matrix $SG$ for $n=100,\phi=0.8, h=1, k\in\{4,1\}, \sigma^2=1$. The colors correspond to the mean value over $100$ repetitions. The standard errors are given in parentheses. }
    \label{fig:ar1_SG_2}
\end{figure}

\section{Experiments}\label{sec:exps}

In this section, we experimentally investigate different types of forecast reconciliation methods in the framework of temporally aggregated time series models and beyond.

We evaluate the results based on percentage errors, namely for aggregation parameter $k$ we obtain a relative mean squared error of
\begin{align*}
    \text{rMSE}^{[k]}(\tilde{\mathbf y}, \hat{\mathbf y}) = \frac{\sum_i \left\|\tilde{\mathbf y}_i^{[k]} - \mathbf y_i^{[k]}\right\|_2^2}{\sum_i \left\|\hat{\mathbf y}_i^{[k]} - \mathbf y_i^{[k]}\right\|_2^2} - 1,
\end{align*}
where $\tilde{\mathbf y}_i^{[k]}$ denotes the $i$-th vector of reconciled forecasts of aggregation level $k$,
$\hat{\mathbf y}_i^{[k]}$ is the $i$-th vector of the base forecasts of aggregation level $k$, and $\|\cdot\|_2^2$ is the squared Euclidean norm. We analyze both in-sample (training) reconciliation errors and out-of-sample (test) reconciliation errors to assess generalizability, aggregating the corresponding observations accordingly. Depending on the level of aggregation, we may encounter multi-step ahead forecasts. To simplify, we aggregate these multi-step forecasts, providing a single error measure for each aggregation level.

The test reconciliation forecasts are acquired through the following procedure. The reconciliation method employed is trained exclusively on the training data, meaning that the covariance matrix and the corresponding base ARIMA models are estimated solely based on the training data. Subsequently, forecasts for $M_k h$ steps ahead are generated for the test data in a cumulative manner, effectively utilizing the test data for the base test forecasts. The base ARIMA models are modelled by the \texttt{Arima} function whereas automated model selection is based on the \texttt{auto.arima} function (based on the corrected AIC measure, developed by \citet{aicc}), both from the \texttt{forecast} package in R (\citet{forecast-pkg}).

MSE values are computed for each level of the hierarchy as well on an overall level by taking the sum of MSEs across all levels. The reason we consider MSE instead of a different error measure is that the minimum trace reconciliation method exactly minimizes the sum of the error variances.

For a robustness check of the results, we also consider a relative mean absolute error and use it to calculate percentage errors. Namely,
\begin{align}
     \text{rMAE}^{[k]}(\tilde{\mathbf y}, \hat{\mathbf y}) = \frac{\sum_i \left\|\tilde{\mathbf y}_i^{[k]} - \mathbf y_i^{[k]}\right\|_1}{\sum_i \left\|\hat{\mathbf y}_i^{[k]} - \mathbf y_i^{[k]}\right\|_1} - 1,
\end{align}
where $\|\cdot\|_1$ is the absolute-value norm. This error measure is inherently less sensitive to outliers. We have focused on reporting results for rMSE to keep things concise. The conclusions remain consistent even when considering rMAE or similar relative error measures.

Overall, if a percentage error is below $0$, this indicates that the reconciled forecasts perform better, whereas errors above $0$ suggest the opposite. It is important to note that we are only examining relative errors, focusing on the performance of the temporally reconciled forecasts rather than the base forecasts. Our aim is to evaluate how different types of temporal forecast reconciliation methods perform.

For each of the following settings we simulate $N=100$ time series and compute training and test rMSE values. The training data always consist of $75\%$ of the total data.

The covariance estimators we focus on in the simulations are
\begin{itemize}
    \item OLS: $\hat W_h = k_h I$,
    \item Full Cov.: $\hat W_h = \frac{1}{0.75n} \sum_{i=1}^{0.75n} \left(\hat{\mathbf e}_{i}^{(h)}\right)\left(\hat{\mathbf e}_{i}^{(h)}\right)'$, where $\hat{\mathbf e}_{i}^{(h)}$ denote the $i$-th vector of $h$-step residuals of the base forecasts, and
    \item Spectral Scaling \citep{NYSTRUP20211127}: 
    \begin{enumerate}
        \item Shrink the empirical cross-correlation matrix $R$ to $R_\text{shrink}=(1-\nu)R+\nu I$
        \item Eigen-decompose this shrunk cross-correlation matrix by $R_\text{shrink}=V\Lambda_\text{shrink}V'$ where $R=V\Lambda V'$. 
        \item Reconstruct the filtered precision matrix by $Q=(WAW'+cI)^{-1}$ such that $W$ contains the first $n_\text{eig}$ columns of $V$ and $A=\text{diag}((1-\nu)\lambda_1+\nu - c, \dots, (1-\nu)\lambda_\text{neig}+\nu - c)$ with $c$ being the average of the remaining smallest shrunken eigenvalues.
        \item Set $\hat W_h^{-1} = D_{\text{var}}^{-1/2}QD_{\text{var}}^{-1/2}$ where $D_{\text{var}}$ corresponds to variance scaling.
    \end{enumerate}
    The two hyperparameters $\nu,n_\text{eig}$ are chosen in a time series cross-validation procedure as discussed in \cite{BERGMEIR2012192_tscv}. We should not use regular cross-validation because of non-independent data. The authors do not follow this procedure and rather rely on an optimally chosen shrinkage parameter $\nu$ (\cite{LEDOIT_WOLF_SHRINKAGE}) and a fixed number of chosen eigenvectors $n_\text{eig}$.
\end{itemize}

Other estimators, including various shrinkage estimators and scaling variants, were initially considered in the simulations but produced results very similar to those listed. Additionally, the bottom-up approach was also examined. 

\subsection{Autoregressive Models of Order 1}
In the first experiment, we want to demonstrate the implications of Theorem~\ref{thm:mint_bu}. We simulate stationary $\text{AR}(1)$ data on the bottom level of the hierarchy and aggregate them to obtain the remaining levels of the hierarchy. The parameters we vary are
\begin{itemize}
    \item Sample size on the top level $n=20,50,100$,
    \item AR parameter $\phi=-0.9,\dots,0.9$,
    \item Hierarchy size $k\in\{4,1\},\{5,1\},\{12,4,1\}$,
    \item Forecast horizon based on the top level $h=1,2$, and
    \item Fixed order of the ARMA models to remove model uncertainty which corresponds to Scenario $2$ of \citet{ATHANASOPOULOS201760}, or automated model selection (Scenario $3$).
\end{itemize}
The innovation variance on the bottom level $\sigma^2=1$ is fixed. In order to save space, we only report results for $h=1$ and the simple hierarchy $k\in\{4,1\}$. The remaining settings lead to similar conclusions.

Table~\ref{tab:h=1, k=(1,4), sigma.sq=1, auto=FALSE} presents the training and test rMSE values for the selected reconciliation methods and parameters, grouped by buckets of the AR parameter. This allows us to distinguish between high negative or positive correlation as well as almost random walks. We observe that most improvements occur at the top level of the hierarchy, while reconciliation at the bottom level yields worse results, especially out-of-sample. Overall, we notice similar improvements for the bottom-up approach compared to more sophisticated methods once the sample size is sufficiently large. Note that the highest improvements are observed for a large AR parameter across all methods due to the increased uncertainty in the forecasts. Table~\ref{tab:h=1, k=(1,4), sigma.sq=1, auto=TRUE} in the Appendix shows the results for auto-selected models. The increase in model uncertainty which usually results in simpler models is presented through larger standard errors. Besides the fact that the full covariance estimator is singular in many cases, the main conclusions remain the same. The possible misspecification of the base models does not impact the fact that the bottom-up approach is very suited for such setting.

\begin{table}
\centering
\caption{\label{tab:h=1, k=(1,4), sigma.sq=1, auto=FALSE}Mean rMSE per buckets of $\phi$ for $ h=1, k\in\{4,1\}, \sigma^2=1 $ and fixed order of the used models. The standard errors are given in parentheses.}
\centering
\resizebox{\ifdim\width>\linewidth\linewidth\else\width\fi}{!}{
\begin{tabular}[t]{lllrrrrrr}
\toprule
\multicolumn{3}{c}{ } & \multicolumn{3}{c}{Training rMSE} & \multicolumn{3}{c}{Test rMSE} \\
\cmidrule(l{3pt}r{3pt}){4-6} \cmidrule(l{3pt}r{3pt}){7-9}
Level & n & Recon. Type & {}[-0.9,-0.5] & (-0.5,0.5] & (0.5,0.9] & {}[-0.9,-0.5] & (-0.5,0.5] & (0.5,0.9]\\
\midrule
 &  & Bottom-Up & 0.15 (0.02) & 0.16 (0.01) & 0.03 (0.02) & \textbf{-0.08 (0.03)} & \textbf{-0.04 (0.02)} & \textbf{-0.11 (0.03)}\\

 &  & Full Cov. & \textbf{-0.03 (0.01)} & 0.01 (0.01) & \textbf{-0.10 (0.01)} & 0.11 (0.04) & 0.09 (0.02) & -0.06 (0.03)\\

 &  & Spectral & -0.03 (0.00) & 0.01 (0.00) & -0.08 (0.01) & -0.02 (0.02) & -0.01 (0.01) & -0.04 (0.04)\\

 & \multirow{-4}{*}{ 20} & OLS & -0.01 (0.00) & \textbf{ 0.01 (0.00)} & -0.04 (0.00) & -0.06 (0.01) & -0.04 (0.00) & -0.08 (0.01)\\
\cmidrule{2-9}
 &  & Bottom-Up & 0.02 (0.01) & 0.06 (0.00) & -0.08 (0.01) & \textbf{-0.11 (0.01)} & \textbf{-0.07 (0.01)} & \textbf{-0.14 (0.01)}\\

 &  & Full Cov. & \textbf{-0.05 (0.00)} & \textbf{-0.01 (0.00)} & \textbf{-0.12 (0.01)} & -0.05 (0.01) & 0.01 (0.01) & -0.13 (0.01)\\

 &  & Spectral & -0.03 (0.00) & 0.00 (0.00) & -0.11 (0.01) & -0.05 (0.01) & -0.02 (0.01) & -0.13 (0.01)\\

 & \multirow{-4}{*}{ 50} & OLS & -0.02 (0.00) & 0.00 (0.00) & -0.05 (0.00) & -0.04 (0.00) & -0.03 (0.00) & -0.06 (0.00)\\
\cmidrule{2-9}
 &  & Bottom-Up & -0.03 (0.00) & 0.01 (0.00) & -0.11 (0.01) & \textbf{-0.09 (0.01)} & \textbf{-0.04 (0.00)} & \textbf{-0.17 (0.01)}\\

 &  & Full Cov. & \textbf{-0.05 (0.00)} & \textbf{-0.01 (0.00)} & \textbf{-0.13 (0.01)} & -0.06 (0.01) & 0.00 (0.00) & -0.15 (0.01)\\

 &  & Spectral & -0.04 (0.00) & -0.01 (0.00) & -0.12 (0.01) & -0.05 (0.01) & -0.02 (0.00) & -0.15 (0.01)\\

\multirow{-12}{*}[1\dimexpr\aboverulesep+\belowrulesep+\cmidrulewidth]{ Level 1} & \multirow{-4}{*}{ 100} & OLS & -0.02 (0.00) & 0.00 (0.00) & -0.05 (0.00) & -0.04 (0.00) & -0.01 (0.00) & -0.06 (0.00)\\
\cmidrule{1-9}
 &  & Bottom-Up & 0.00 (0.00) & 0.00 (0.00) & 0.00 (0.00) & \textbf{ 0.00 (0.00)} & \textbf{ 0.00 (0.00)} & \textbf{ 0.00 (0.00)}\\

 &  & Full Cov. & \textbf{-0.06 (0.01)} & \textbf{-0.06 (0.01)} & \textbf{-0.08 (0.01)} & 0.17 (0.02) & 0.24 (0.03) & 0.17 (0.03)\\

 &  & Spectral & -0.04 (0.00) & -0.04 (0.00) & -0.05 (0.01) & 0.07 (0.01) & 0.07 (0.01) & 0.12 (0.02)\\

 & \multirow{-4}{*}{ 20} & OLS & -0.01 (0.00) & -0.03 (0.00) & 0.00 (0.01) & 0.02 (0.00) & 0.04 (0.01) & 0.16 (0.03)\\
\cmidrule{2-9}
 &  & Bottom-Up & 0.00 (0.00) & 0.00 (0.00) & 0.00 (0.00) & \textbf{ 0.00 (0.00)} & \textbf{ 0.00 (0.00)} & \textbf{ 0.00 (0.00)}\\

 &  & Full Cov. & \textbf{-0.02 (0.00)} & \textbf{-0.03 (0.00)} & \textbf{-0.03 (0.00)} & 0.05 (0.01) & 0.09 (0.01) & 0.04 (0.01)\\

 &  & Spectral & -0.01 (0.00) & -0.02 (0.00) & -0.01 (0.00) & 0.02 (0.01) & 0.05 (0.01) & 0.03 (0.01)\\

 & \multirow{-4}{*}{ 50} & OLS & 0.00 (0.00) & -0.01 (0.00) & 0.04 (0.01) & 0.01 (0.00) & 0.03 (0.01) & 0.11 (0.01)\\
\cmidrule{2-9}
 &  & Bottom-Up & 0.00 (0.00) & 0.00 (0.00) & 0.00 (0.00) & \textbf{ 0.00 (0.00)} & \textbf{ 0.00 (0.00)} & \textbf{ 0.00 (0.00)}\\

 &  & Full Cov. & \textbf{-0.01 (0.00)} & \textbf{-0.02 (0.00)} & \textbf{-0.02 (0.00)} & 0.02 (0.00) & 0.02 (0.00) & 0.02 (0.01)\\

 &  & Spectral & -0.01 (0.00) & -0.01 (0.00) & 0.00 (0.00) & 0.01 (0.00) & 0.01 (0.00) & 0.02 (0.01)\\

\multirow{-12}{*}[1\dimexpr\aboverulesep+\belowrulesep+\cmidrulewidth]{ Level 2} & \multirow{-4}{*}{ 100} & OLS & 0.00 (0.00) & 0.00 (0.00) & 0.05 (0.00) & 0.01 (0.00) & 0.01 (0.00) & 0.10 (0.01)\\
\cmidrule{1-9}
 &  & Bottom-Up & 0.07 (0.01) & 0.12 (0.01) & 0.02 (0.01) & \textbf{-0.08 (0.01)} & \textbf{-0.05 (0.01)} & \textbf{-0.12 (0.03)}\\

 &  & Full Cov. & \textbf{-0.04 (0.01)} & \textbf{-0.01 (0.01)} & \textbf{-0.11 (0.01)} & 0.10 (0.02) & 0.10 (0.01) & -0.05 (0.02)\\

 &  & Spectral & -0.03 (0.00) & -0.01 (0.00) & -0.08 (0.01) & 0.01 (0.01) & 0.00 (0.01) & -0.04 (0.04)\\

 & \multirow{-4}{*}{ 20} & OLS & -0.01 (0.00) & 0.00 (0.00) & -0.04 (0.00) & -0.04 (0.00) & -0.03 (0.00) & -0.07 (0.00)\\
\cmidrule{2-9}
 &  & Bottom-Up & 0.01 (0.00) & 0.04 (0.00) & -0.08 (0.01) & \textbf{-0.07 (0.01)} & \textbf{-0.06 (0.01)} & \textbf{-0.14 (0.01)}\\

 &  & Full Cov. & \textbf{-0.04 (0.00)} & \textbf{-0.01 (0.00)} & \textbf{-0.11 (0.01)} & -0.01 (0.01) & 0.02 (0.00) & -0.12 (0.01)\\

 &  & Spectral & -0.02 (0.00) & 0.00 (0.00) & -0.10 (0.01) & -0.02 (0.00) & -0.01 (0.01) & -0.12 (0.01)\\

 & \multirow{-4}{*}{ 50} & OLS & -0.01 (0.00) & 0.00 (0.00) & -0.04 (0.00) & -0.02 (0.00) & -0.02 (0.00) & -0.05 (0.00)\\
\cmidrule{2-9}
 &  & Bottom-Up & -0.02 (0.00) & 0.01 (0.00) & -0.10 (0.01) & \textbf{-0.06 (0.01)} & \textbf{-0.03 (0.00)} & \textbf{-0.15 (0.01)}\\

 &  & Full Cov. & \textbf{-0.03 (0.00)} & \textbf{-0.01 (0.00)} & \textbf{-0.12 (0.01)} & -0.03 (0.00) & 0.00 (0.00) & -0.14 (0.01)\\

 &  & Spectral & -0.03 (0.00) & -0.01 (0.00) & -0.11 (0.00) & -0.02 (0.00) & -0.01 (0.00) & -0.14 (0.01)\\

\multirow{-12}{*}[1\dimexpr\aboverulesep+\belowrulesep+\cmidrulewidth]{ Overall} & \multirow{-4}{*}{ 100} & OLS & -0.01 (0.00) & 0.00 (0.00) & -0.04 (0.00) & -0.02 (0.00) & -0.01 (0.00) & -0.05 (0.00)\\
\bottomrule
\end{tabular}}
\end{table}

\subsection{ARMA Models of Higher Order}
For more complex models such as $\text{ARMA}(2,2)$ and its aggregates, computing the covariance matrices of the forecast errors becomes very tedious. Therefore, we focus on experimental evaluation for these cases to investigate if the implications of Theorem~\ref{thm:mint_bu} still hold.

As the complexity of an ARMA model increases, identifying the parameter space that yields stationary models becomes non-trivial. It is particularly challenging to define stationary parameter combinations for $p,q > 2$. To address this, we randomly draw stationary parameters using the partial correlation function as described by \citet{JONES1987}.

For each combination of $p \in\{1,2\}$ and $q \in \{0,1,2\}$, we randomly draw $100$ sets of parameters ${\phi_1, \dots, \phi_p, \theta_1, \dots, \theta_q}$. To mitigate the randomness of each realization, we further simulate $20$ time series for each of the $100$ random parameter sets.

\begin{figure}[!ht]
    \includegraphics[width=\textwidth]{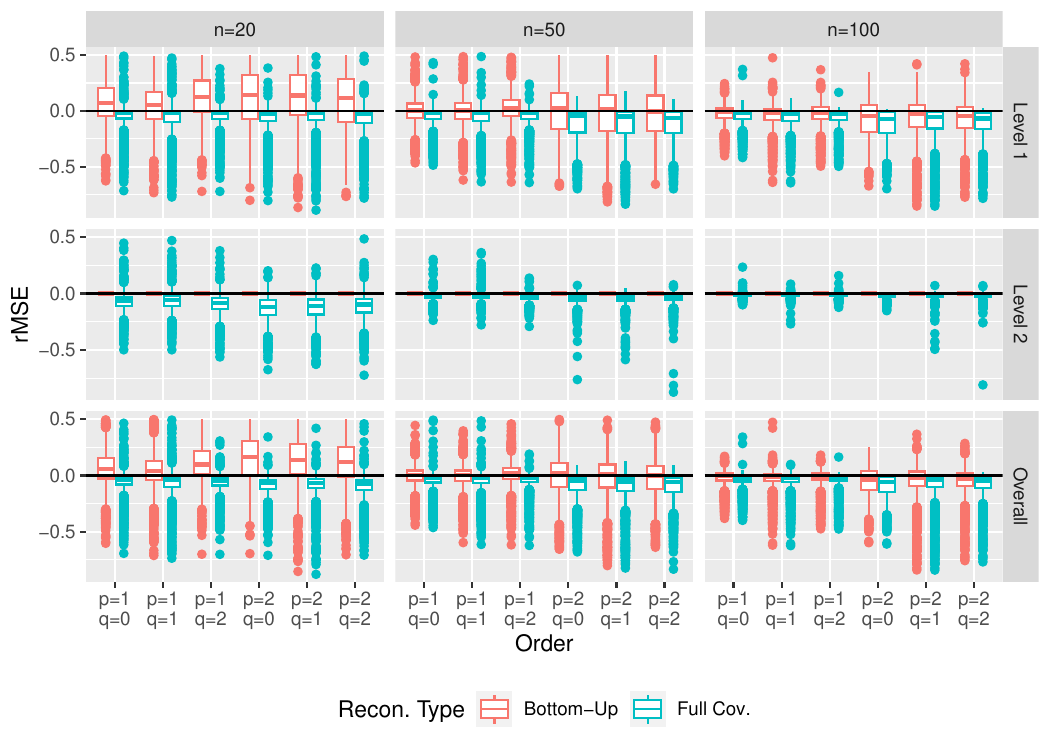}
    \caption{In-sample rMSE for various ARMA models and $h=1,k\in\{4,1\},\sigma^2=1$ and fixed-order models.}
    \label{fig:arma_1_14_1_train}
\end{figure}

Figure~\ref{fig:arma_1_14_1_train} shows the in-sample rMSE values for the full covariance estimator as well as the bottom-up approach for various sample sizes of the top level. The setting is $h=1,k\in\{4,1\}$ and $\sigma^2=1$ as well as fixed-order models. As in the $\text{AR}(1)$ for varying AR parameters, we observe equivalent reconciliation performance for a larger sample size for any $\text{ARMA}(p,q)$ present. In the low sample size case we see that bottom-up performs worse with increasing model complexity. Interestingly, this difference becomes larger for higher model complexity. We also observe that the full covariance method can produce better forecasts on the bottom level. This improvement also increases with the complexity of the bottom level base model. Overall, the MA order does not seem as impactful as the AR order.

\begin{figure}[!ht]
    \includegraphics[width=\textwidth]{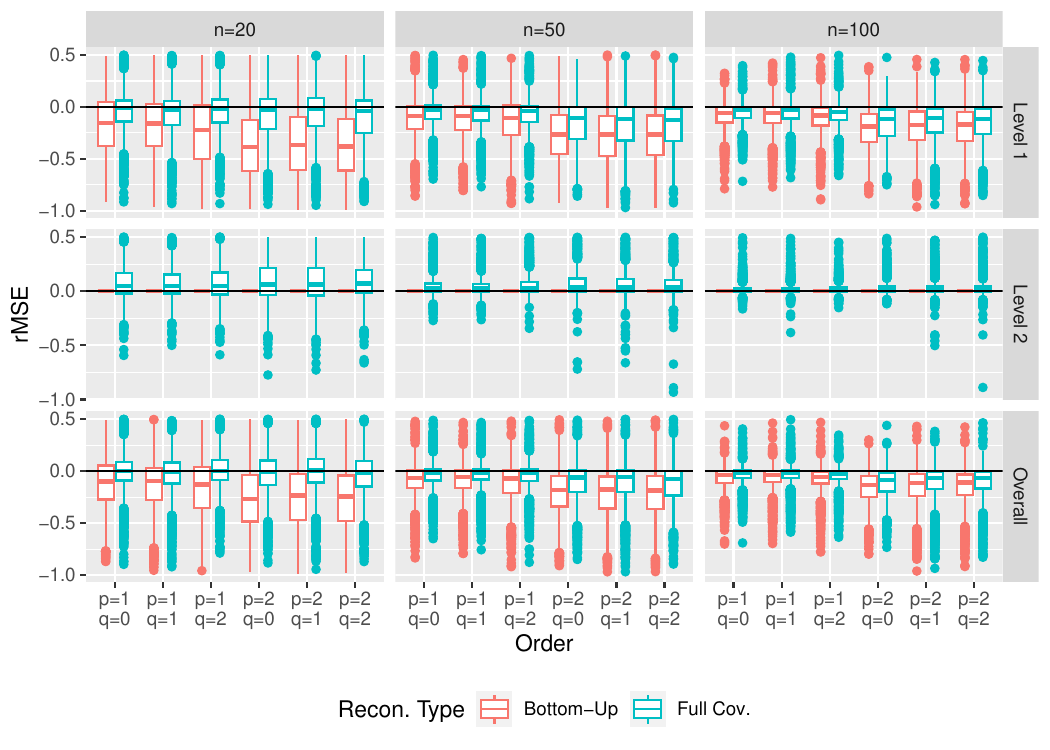}
    \caption{Out-of-sample rMSE for various ARMA models and $h=1,k\in\{4,1\},\sigma^2=1$ and fixed-order models.}
    \label{fig:arma_1_14_1_test}
\end{figure}

Figure~\ref{fig:arma_1_14_1_test} shows the test errors for the very same setting. As in the simple $\text{AR}(1)$ case, the roles of bottom-up and using the full covariance matrix estimator switch and the bottom-up approach performs better the more complex the base bottom model is set up to be.

In this analysis, we aggregate over the whole space of stationary models of a certain order. Hence we also take a look at the performance of $2$-dimensional base models in a more detailed manner. Figure~\ref{fig:ar2_1_14_1_train} shows the mean training rMSE differences between the full covariance-based reconciliation and the bottom-up approach for the randomly drawn stationary $\text{AR}(2)$ models. Based on this plot, there is no tendency for performance based on the space of the stationary parameters and we can argue that Theorem~\ref{thm:mint_bu} holds uniformly for ARMA models. Test errors are available in Figure~\ref{fig:ar2_1_14_1_test}. Similarly, Figure~\ref{fig:arma11_1_14_1_train} shows the training mean rMSE differences for $\text{ARMA}(1,1)$ models. The corresponding test errors are available in Figure~\ref{fig:arma11_1_14_1_test}. For readability, Figures~\ref{fig:ar2_1_14_1_train}--\ref{fig:arma11_1_14_1_test} are put in the Appendix.

\subsection{Misspecification of the Base Models} 

\begin{figure}[!ht]
    \includegraphics[width=\textwidth]{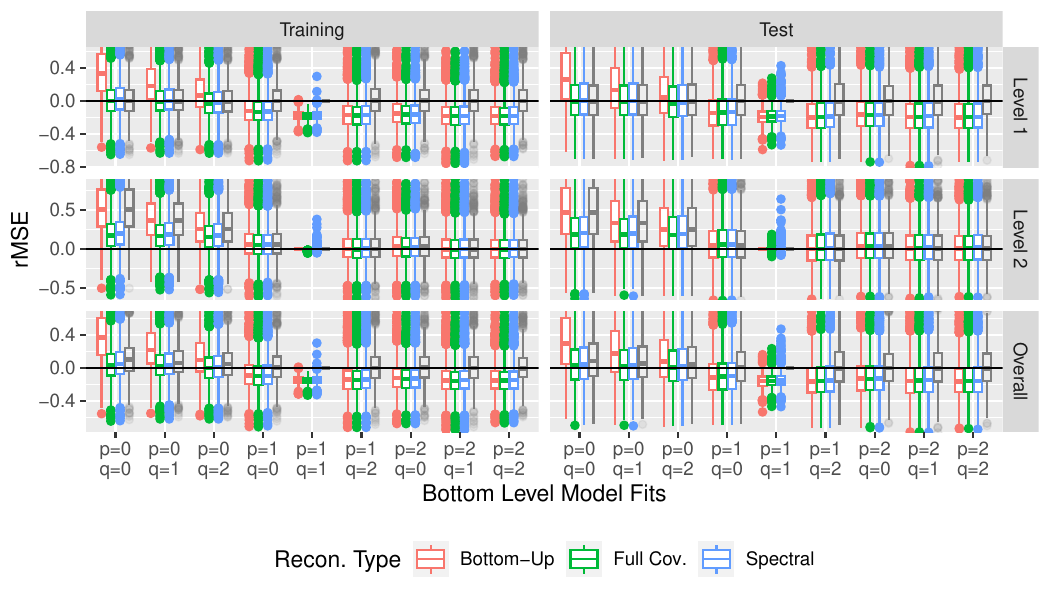}
    \caption{rMSE for various ARMA models and $h=1,k\in\{4,1\},\sigma^2=1$ in partly misspecified scenarios. The greyish boxplots indicate the relative error of the corresponding base models to the correct base model.}
    \label{fig:arma11_misspec}
\end{figure}

Next, we conduct a simulation study to better understand the impact of model misspecification. Our setup involves again the simple hierarchy with $k\in\{4,1\}, n = 100$, and $\sigma^2 = 1$. We generate bottom level data from an $\text{ARMA}(1,1)$ process, which also yields an $\text{ARMA}(1,1)$ process at the top level.

To examine how misspecification affects the bottom-up approach, we intentionally misspecify models at the bottom level. Figure~\ref{fig:arma11_misspec} displays the relative MSE values for fitted bottom level models with orders $p, q \in \{0, 1, 2\}$. This allows us to explore both under and over-fitting scenarios. We calculate the relative errors in comparison to the correctly specified model ($p = q = 1$). The errors for misspecified base models are shown as grey boxplots in Figure~\ref{fig:arma11_misspec}.

Here, we keep the top level model correctly specified. For each simulation, we generate $20$ random time series at the bottom level, each with $|\phi|,|\theta|>0.7$, and create $100$ realizations of each series. This constraint leads to clearer results, though similar patterns emerge without it.

Key findings for both training and test forecasts are as follows.
\begin{enumerate}
    \item Underfitting the autoregressive component significantly degrades bottom-up performance.
    \item Misspecification of the moving average component has minimal impact.
    \item When overfitting the autoregressive component, all forecast reconciliation methods perform similarly well. This is likely because the overfitted base models are not truly overfitting, as unnecessary parameters are estimated close to $0$.
\end{enumerate}

These results underscore the value of advanced forecast reconciliation methods compared to simpler approaches like bottom-up. Such methods can help mitigate the effects of model misspecification and resulting forecast errors, which are inevitable in real-world applications.

To ensure a comprehensive analysis, we also examine scenarios where the base models at the top level are misspecified, as illustrated in Figure~\ref{fig:arma11_misspec2}. This figure displays the test relative errors using the previously described methodology. For clarity, we only show the results of the bottom-up and the full covariance method.
Our findings indicate that misspecification at the top level, both in isolation and in combination with bottom level misspecification, has a notably less significant impact. This is evidenced by the relatively consistent boxplots across each column.
Interestingly, the bottom-up approach and the covariance-based reconciliation method demonstrate comparable performance, regardless of the degree of misspecification at the bottom level. This observation further underscores the effectiveness and reliability of the bottom-up approach in various scenarios.

\begin{figure}[!ht]
    \includegraphics[width=\textwidth]{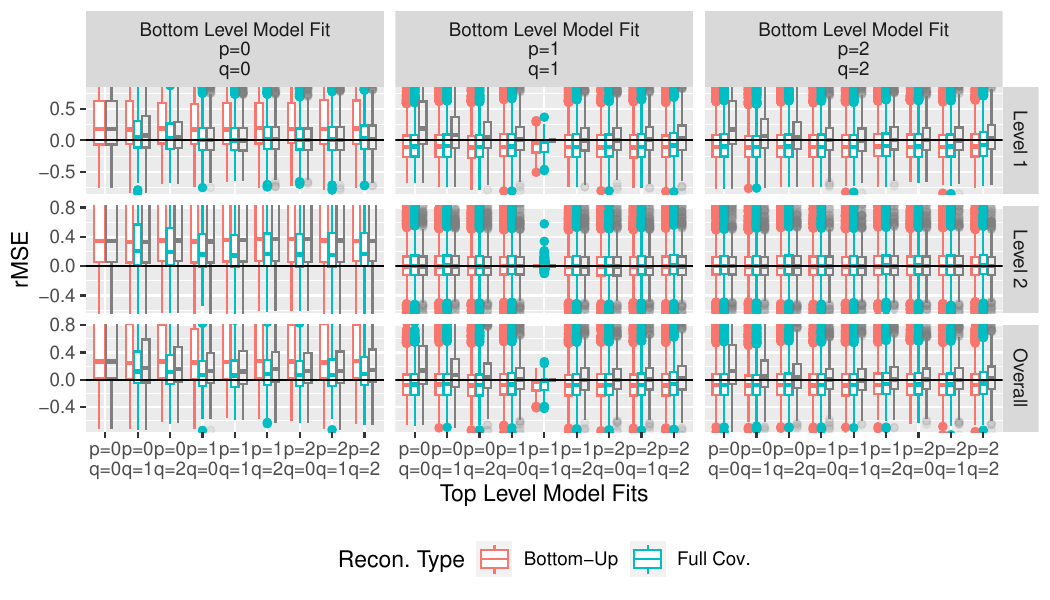}
    \caption{Test rMSE for various ARMA models and $h=1,k\in\{4,1\},\sigma^2=1$ in fully misspecified scenarios. The greyish boxplots indicate the relative error of the corresponding base models to the correct base model.}
    \label{fig:arma11_misspec2}
\end{figure}

\section{Real Data Applications}\label{sec:real_data}
\subsection{A\&E Emergency Service Demand}
Following the data example of \citet{ATHANASOPOULOS201760}, we illustrate this paper's work on the \textit{Accident \& Emergency Service Demand} dataset, available from the \texttt{thief} package in R (\citet{thief-pkg}). In this dataset, a number of demand statistics of A\&E departments are recorded on a weekly basis from $2010-11-07$ to $2015-06-07$.

Before any modeling, we perform some preprocessing. To ensure complete observations for the hierarchy, we remove the incomplete years $2011$ and $2015$, resulting in $208$ weeks of data. Next, we decompose the weekly time series of interest into seasonal, trend, and remaining components using the \texttt{stl} function in R, and remove the seasonal component. 

We analyze the \textit{Total Attendances} time series and aggregate it on a monthly basis, resulting in a small hierarchy with $52$ months of data. The training data consists of the first $41$ months, or $164$ weeks, with the remaining data designated as test data. As before, we are focused on cumulative one-step-ahead forecasts at the top level of the hierarchy, which in this case would be month-by-month forecasts. Using automated model selection and ignoring the aggregated ARIMA theory, the chosen models are $\text{ARIMA}(0,0,0)$ and $\text{ARIMA}(1,1,1)$, respectively.


To stick to the framework of temporally aggregated ARIMA models, we fix the orders of the used models accordingly. This yields an $\text{ARIMA}(1,1,2)$ model for the monthly time series. The resulting model on the top level gives an AICc value of $406.47$ which is only around $0.6\%$ worse than the automatically selected model, hence it still seems like an appropriate model. Table~\ref{tab:data:ta_wool2} shows the corresponding errors. We observe better generability of the bottom-up approach compared to using the full covariance matrix. The spectral method does seem to perform quite well out-of-sample leading to similar results as the bottom-up approach. A common aspect is still the fact that each covariance-based reconciliation method achieves worse forecasts on the test set for the bottom level time series.

\begin{table}
    \centering
    \caption{\label{tab:data:ta_wool2}Results for A\&E Total Addendances in units of $\text{People}^2$ and Wool Production in units of $(100~\text{tonnes})^2$ with fixed-order models.}
    \centering
    \resizebox{\ifdim\width>\linewidth\linewidth\else\width\fi}{!}{
    \begin{tabular}[t]{l|lrrlrrrr}
    \toprule
     & Level & Training Base MSE & Test Base MSE & Recon. Type & Training Recon. MSE & Test Recon. MSE & Training rMSE & Test rMSE\\
    \midrule
    \multirow{12}{*}{\rotatebox[origin=c]{90}{A\&E Total Attendances}} & &  &  & Bottom-Up & 1219.43 & 1981.83 & 0.08 & \textbf{-0.09}\\
    
     & &  &  & Full Cov. & 1124.78 & 2222.40 & \textbf{0.00} & 0.03\\
    
     & &  &  & Spectral & 1169.64 & 2021.56 & 0.04 & -0.07\\
    
    & \multirow{-4}{*}{ Monthly} & \multirow{-4}{*}{ 1125.60} & \multirow{-4}{*}{ 2166.15} & OLS & 1132.26 & 2112.09 & 0.01 & -0.02\\
    \cmidrule{2-9}
     & &  &  & Bottom-Up & 150.19 & 170.12 & 0.00 & \textbf{0.00}\\
    
     & &  &  & Full Cov. & 148.26 & 186.73 & -0.01 & 0.10\\
    
     & &  &  & Spectral & 148.44 & 171.55 & -0.01 & 0.01\\
    
    & \multirow{-4}{*}{ Weekly} & \multirow{-4}{*}{ 150.19} & \multirow{-4}{*}{ 170.12} & OLS & 147.74 & 176.55 & \textbf{-0.02} & 0.04\\
    \cmidrule{2-9}
     & &  &  & Bottom-Up & 1369.62 & 2151.95 & 0.07 & \textbf{-0.08}\\
    
     & &  &  & Full Cov. & 1273.04 & 2409.12 & \textbf{0.00} & 0.03\\
    
     & &  &  & Spectral & 1318.08 & 2193.11 & 0.03 & -0.06\\
    
    & \multirow{-4}{*}{ Overall} & \multirow{-4}{*}{ 1275.79} & \multirow{-4}{*}{ 2336.26} & OLS & 1280.00 & 2288.64 & 0.00 & -0.02\\
    \midrule\midrule
    \multirow{16}{*}{\rotatebox[origin=c]{90}{Wool Production}}& &  &  & Bottom-Up & 156.23 & 293.50 & 0.05 & 1.23\\

    & &  &  & Full Cov. & 119.19 & 330.31 & \textbf{-0.20} & 1.51\\

    & &  &  & Spectral & 134.30 & 200.81 & -0.10 & 0.52\\

    & \multirow{-4}{*}{ Annual} & \multirow{-4}{*}{ 149.18} & \multirow{-4}{*}{ 131.83} & OLS & 141.12 & 146.46 & -0.05 & \textbf{0.11}\\
    \cmidrule{2-9}
    &  &  &  & Bottom-Up & 80.55 & 80.23 & 0.07 & 1.11\\

    &  &  &  & Full Cov. & 50.70 & 87.30 & \textbf{-0.33} & 1.30\\

    &  &  &  & Spectral & 59.84 & 53.83 & -0.21 & 0.42\\

    & \multirow{-4}{*}{ Biannual} & \multirow{-4}{*}{ 75.42} & \multirow{-4}{*}{ 37.98} & OLS & 66.77 & 40.65 & -0.11 & \textbf{0.07}\\
    \cmidrule{2-9}
    &  &  &  & Bottom-Up & 24.59 & 23.54 & 0.00 & 0.00\\

    &  &  &  & Full Cov. & 16.43 & 24.89 & \textbf{-0.33} & 0.06\\

    &  &  &  & Spectral & 19.11 & 16.60 & -0.22 & -0.29\\

    & \multirow{-4}{*}{ Quarterly} & \multirow{-4}{*}{ 24.59} & \multirow{-4}{*}{ 23.54} & OLS & 21.15 & 13.64 & -0.14 & \textbf{-0.42}\\
    \cmidrule{2-9}
    &  &  &  & Bottom-Up & 261.37 & 397.27 & 0.05 & 1.05\\

    &  &  &  & Full Cov. & 186.32 & 442.49 & \textbf{-0.25} & 1.29\\

    &  &  &  & Spectral & 213.25 & 271.25 & -0.14 & 0.40\\

    &  \multirow{-4}{*}{ Overall} & \multirow{-4}{*}{ 249.19} & \multirow{-4}{*}{ 193.35} & OLS & 229.04 & 200.75 & -0.08 & \textbf{0.04}\\
    \bottomrule
    \end{tabular}}
\end{table}

\subsection{Wool Production}
Another popular dataset is the \textit{woolyrnq} dataset, available from the \texttt{forecast} package in R (\citet{forecast-pkg}). It is about the quarterly production of woolen yarn in Australia, given in units of tonnes from March $1965$ to September $1994$. We aggregate the data to biannual as well as annual frequency yielding a $3$-level hierarchy with $k\in\{4,2,1\}$. In order to have complete observations we remove the partially observed last year $1994$. This then gives us $116$ quarters, $58$ half-years as well as $29$ years of data. As previously, we split the data into $80\%$ training data leading to $23$ training years.

In contrast to the A\&E data, we do not perform any preprocessing 
.A seasonality decomposition such as \texttt{stl} is not suitable for the annual time series, hence we do not perform it at all.

Table~\ref{tab:data:ta_wool2} presents the results for fixed-order models. According to AICc, the most suitable model for the quarterly time series is an $\text{ARIMA}(3,1,2)$ model, which is already quite complex. The theory of aggregated ARIMA models then gives us $\text{ARIMA}(3,1,3)$ and $\text{ARIMA}(3,1,4)$ models for the biannual and annual time series, respectively. Despite the relatively small sample sizes for the biannual and annual data, these high-complexity models do not seem to suffer from overfitting. Using automated model selection, the corresponding models would be $\text{ARIMA}(0,1,0)$ and $\text{ARIMA}(1,1,1)$, respectively, which produce very similar results. Therefore, we only present the results for the fixed-order case.

Nevertheless, we observe similar effects as with the A\&E data. The bottom-up approach performs worse on the training data compared to covariance-based reconciliation methods. On the test data, both the bottom-up approach and the full covariance method exhibit poor generalization, while the spectral and OLS methods perform better. Notably, the full covariance method generalizes even worse than the bottom-up approach, a consistent finding across all data examples and simulations.

\subsection{Additional Datasets}
We run experiments on a set of additional datasets and give an overall summary of the results. Based on the forecasting literature, especially hierarchical forecast reconciliation, we select the following $5$ datasets.
    \begin{itemize}
        \item Energy \citep{PANAGIOTELIS2023693}: Daily electricity generation per source, available from the author's GitHub repository\footnote{\url{https://github.com/PuwasalaG/Probabilistic-Forecast-Reconciliation}}.
        \item Food \citep{NEUBAUER2024}: Daily data from smart fridges with the goal of forecasting the demand for each fridge for the upcoming week in a one-step-ahead fashion.
        \item M3 \citep{M3Data}: Quarterly data of the M3 competition. The data was obtained from the R package Mcomp \citep{M3DataR}.
        \item Prison \citep{hyndman2018forecasting}: Quarterly data about Australian prison population per state.
        \item Tourism \citep{wick:opt_fc_recon, GiroEtAl2023}: Monthly data about visitor nights in Australian districts, taken from GitHub\footnote{\url{https://github.com/daniGiro/ctprob}}.
\end{itemize}

This selection of datasets covers a wide range of frequencies and domains, summarised in Table~\ref{tab:dat_props}. To ensure a non-singular covariance matrix estimate in order to be able to compute the full covariance reconciliation method, we maintain a relatively low order of aggregation. Specifically, we aggregate the energy data into weekly data, the M3 data into annual data, and so on. For each time series, we hold out $20\%$ of the data as test data. Table~\ref{tab:dat_res} also presents the training and test rMSE values for the selected reconciliation methods, summarized by trimmed means and corresponding standard errors. However, this presentation of the results does not provide much insight into the underlying dynamics. We observe that in-sample, the full covariance method performs well, but it does not generalize effectively. Similarly, the bottom-up approach does not produce the best results on the training data and also yields sub-optimal forecasts on the test data, contrary to the simulations. Comparing the two approaches we do observe that the full covariance method generalizes worse than the bottom-up method, confirming our simulation findings. Finally, the more sophisticated approach of utilizing the spectral decomposition performs well out-of-sample.

\begin{table}

\caption{\label{tab:dat_props}Dataset properties. $N$ denotes the number of total time series in the dataset, and $n_\text{top},n_\text{bottom}$ give the range of the available lengths in the hierarchy given by $k$.}
\centering
\begin{tabular}[t]{lllll}
\toprule
Dataset & $N$ & $n_{\text{top}}$ & $n_{\text{bottom}}$ & $k\in$\\
\midrule
Energy & 23 & 51-51 & 357-357 & $\{7,1\}$\\
Food & 122 & 7-107 & 35-535 & $\{5,1\}$\\
M3 & 756 & 8-18 & 32-72 & $\{4,1\}$\\
Prison & 8 & 12-12 & 48-48 & $\{4,1\}$\\
Tourism & 525 & 76-76 & 228-228 & $\{3,1\}$\\
\bottomrule
\end{tabular}
\end{table}

\begin{table}
\centering
\caption{\label{tab:dat_res}$10\%$-trimmed overall means for $5$ datasets and selected reconciliation methods. The standard errors are available in parentheses.}
\centering
\resizebox{\ifdim\width>\linewidth\linewidth\else\width\fi}{!}{
\begin{tabular}[t]{lrrrrrrrr}
\toprule
\multicolumn{1}{c}{ } & \multicolumn{4}{c}{Training rMSE} & \multicolumn{4}{c}{Test rMSE} \\
\cmidrule(l{3pt}r{3pt}){2-5} \cmidrule(l{3pt}r{3pt}){6-9}
Dataset & Bottom-Up & Full Cov. & OLS & Spectral & Bottom-Up & Full Cov. & OLS & Spectral\\
\midrule
Energy & -0.03 (0.02) & \textbf{-0.06 (0.01)} & -0.02 (0.00) & -0.06 (0.01) & -0.02 (0.05) & -0.02 (0.05) & -0.02 (0.00) & \textbf{-0.04 (0.03)}\\
Food & 0.04 (0.01) & \textbf{-0.03 (0.01)} & -0.01 (0.00) & -0.01 (0.00) & 0.04 (0.02) & 0.01 (0.01) & -0.01 (0.00) & \textbf{0.00 (0.01)}\\
M3 & -0.17 (0.02) & \textbf{-0.28 (0.01)} & -0.11 (0.00) & -0.27 (0.01) & -0.09 (0.03) & -0.13 (0.03) & -0.11 (0.01) & \textbf{-0.19 (0.02)}\\
Prison & \textbf{-0.18 (0.12)} & -0.12 (0.17) & -0.11 (0.02) & 0.00 (0.18) & \textbf{-0.40 (0.12)} & -0.30 (0.14) & -0.14 (0.03) & -0.01 (0.20)\\
Tourism & 0.03 (0.00) & \textbf{-0.05 (0.00)} & 0.00 (0.00) & -0.01 (0.00) & 0.01 (0.01) & \textbf{-0.02 (0.01)} & -0.01 (0.00) & -0.01 (0.00)\\
\bottomrule
\end{tabular}}
\end{table}

\begin{figure}[!ht]
    \includegraphics[width=\textwidth]{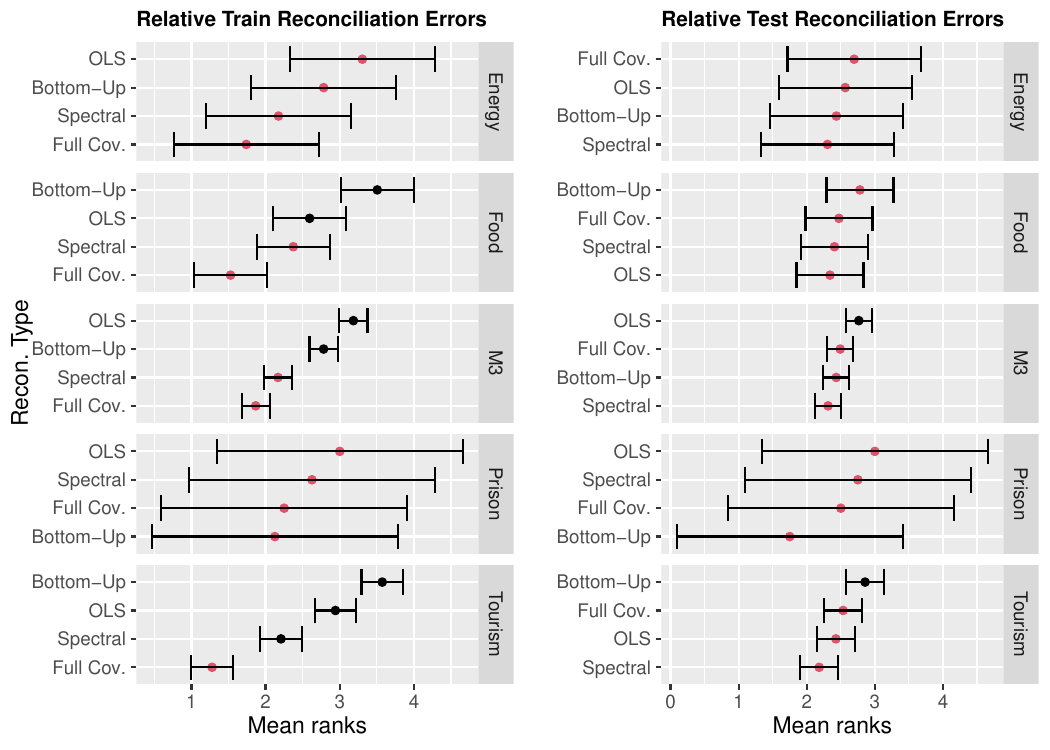}
    \caption{MCB test with confidence level $0.95$ on the overall level.}\label{fig:dat_all_nemenyi}
\end{figure}

We conduct an accuracy ranking based on multiple comparisons with the best (MCB) test, introduced by \citet{KONING2005397}, for each dataset, divided into training and test data. Figure~\ref{fig:dat_all_nemenyi} clearly demonstrates the statistically superior performance of the full covariance method compared to the bottom-up approach in-sample, while the performance difference becomes practically negligible on the test data, consistent with our theory and simulations.

Additionally, Figure~\ref{fig:dat_all_percs} presents percentile plots comparing the four different approaches. These plots further illustrate that while the full covariance method performs well in-sample, its performance significantly deteriorates out-of-sample. Specifically, on the training data, more forecasts are improved by full covariance reconciliation, but this relationship largely reverses on the test data.

\begin{figure}[!ht]
    \includegraphics[width=\textwidth]{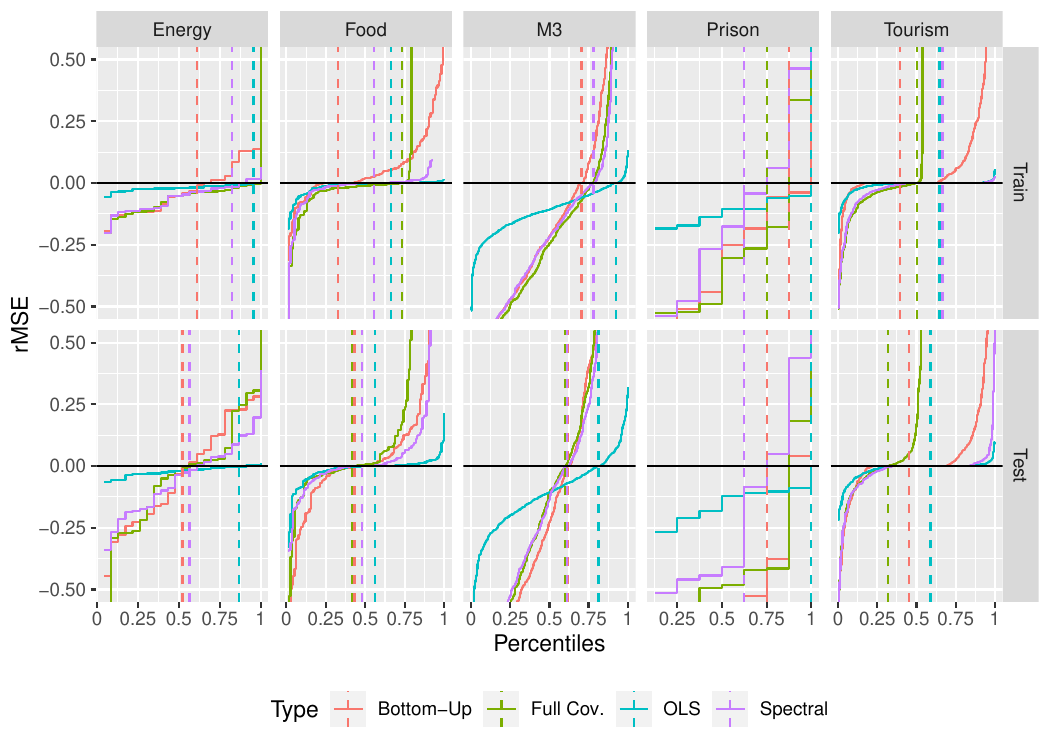}
    \caption{Percentile plots for each dataset, split by training and test set.}\label{fig:dat_all_percs}
\end{figure}

\section{Conclusions}\label{sec:concl}
In this paper, we explored the theoretical implications of applying the minimum trace reconciliation method within the context of temporal hierarchies. By examining temporally aggregated ARMA models, we demonstrated that the optimal reconciliation method, when based on the true covariance matrix, is equivalent to a bottom-up approach. Our extensive simulation studies tested this theory across various scenarios involving different model complexities, hierarchy structures, and levels of uncertainty. The findings support our theory, indicating that the bottom-up method is a viable approach. This aligns with numerous literature findings where the bottom-up approach consistently produces useful results in suitable settings.

The simulation results also reveal that in-sample, covariance-based minimum trace reconciliation methods outperform the simple bottom-up approach. However, this relationship reverses out-of-sample, with the bottom-up approach generalizing better on the test data compared to the full covariance matrix across simulations and data examples. Further research is necessary to understand why this effect occurs so markedly. In misspecified scenarios similar conclusions can be drawn. When having underfitting base models on a lower level, the bottom-up approach should be used with caution while in remaining scenarios sophisticated reconciliation approaches perform similarly. Additionally, other estimators were tested and showed improved performance over using the full covariance matrix in certain settings, highlighting the potential for the ongoing research of new temporal hierarchical covariance estimators in the minimum trace approach.

Overall, our work contributes to the field of temporal forecast reconciliation by linking it to temporally aggregated ARMA models. We have theoretically established that the bottom-up approach is the optimal reconciliation method and reinforced this with comprehensive simulation studies and data illustrations. This supports the use of the bottom-up method in both theoretical and practical applications.

\section*{Computational details}
The simulations and data examples were carried out in \textbf{R} 4.3.0. The corresponding source code of this paper in the form of an R package is available from GitHub at \url{https://github.com/neubluk/FTATS}. For convenience, all datasets except the $\text{M}3$ dataset are included in the package.


\section*{Acknowledgments and Disclosure of Funding}
We acknowledge support from the Austrian Research Promotion Agency (FFG), Basisprogramm project “Meal Demand Forecast” and Schrankerl GmbH for the cooperation and access to their data. We further acknowledge funding from the Austrian Science
Fund (FWF) for the project “High-dimensional statistical learning: New methods to
advance economic and sustainability policies” (ZK 35), jointly carried out by WU
Vienna University of Economics and Business, Paris Lodron University Salzburg, TU
Wien, and the Austrian Institute of Economic Research (WIFO). 

\newpage

\appendix
\section{Calculations and Proofs}\label{app:calc_proofs}

As in \citet{Silvestrini2005TEMPORALAO}, we illustrate this framework based on an $\text{AR}(1)$ model. Let $y_t\sim\text{AR}(1)$ be centered at $0$ with AR parameter $\phi\in(-1,1)$ and innovation variance $\sigma^2$. According to Eq.~\eqref{eq:agg_arima} we obtain $y_T^\ast\sim\text{ARMA}(1,1)$ for any $k>1$ and AR parameter $\beta = \phi^k$. The MA parameter $\eta$ as well as the noise $\sigma_{\ast}^2$ are computed as follows.

For lags $0,1$ we compute the autocovariances of $(1+\eta B)\epsilon^\ast_T$ with $B=L^k$ and $T(L)\epsilon_t$ with the aggregation polynomial $T(L)$ given by
\begin{align}
    T(L) &= \frac{1-\delta^k L^k}{1-\delta L}\frac{1-L^k}{1-L} \\
        &= \sum_{i=0}^{k-1} \delta^i L^i \sum_{j=0}^{k-1} L^j,
\end{align}
with $\delta = \phi^{-1}$ being the inverse root of the corresponding AR polynomial and $L$ being the lag operator such that $Ly_t = Ly_{t-1}$.

Because the MA order is $1$, all lags greater than $1$ are zero. First note that
\begin{align*}
    T(L)\epsilon_t = (1,\phi,\dots,\phi^{k-1})
    \overbrace{\begin{pmatrix}
        1 & \dots &\dots & \dots & 1 & 0 & \dots & \dots & 0 \\
        0 & 1 & \dots & \dots & \vdots & 1 & 0 & \dots & 0 \\
        \vdots & \ddots & \ddots & \ddots & \vdots & \vdots & \ddots & \ddots & \vdots \\
        \vdots & \ddots & \ddots & \ddots & \vdots & \vdots & \ddots & \ddots & \vdots \\
        \undermat{k\times k}{0 & \dots & \dots & 0 & 1 &} \undermat{k\times (k-1)}{1 & \dots & \dots & 1}
    \end{pmatrix}}^{=A}
    \begin{pmatrix}
        \epsilon_t\\
        \vdots\\
        \epsilon_{t-(2k-2)}
    \end{pmatrix}.
\end{align*}

Next, we set up the equations based on the auto-correlation functions to determine $\eta$ and $\sigma_{\ast}^2$.

To this end, the variances are computed to be
    \begin{align}\label{eq:agg_ar1_var}
        \gamma^\ast(0) &= \text{Var}( (1+\eta B)\epsilon^\ast_T) \nonumber\\
        &= (1+\eta^2)\sigma_{\ast}^2,
    \end{align}
which must be equal to 
    \begin{align}\label{eq:agg_ar1_var2}
        \gamma(0) &= \text{Var}(T(L)\epsilon_t) \nonumber \\
        &=\sigma^2 (1,\phi,\dots,\phi^{k-1})AA'(1,\phi,\dots,\phi^{k-1})' \nonumber \\
        &= \sigma^2\left(\sum_{j=0}^{k-1} \left(\sum_{i=0}^j \phi^i\right)^2 + \sum_{j=0}^{k-1} \left(\sum_{i=j}^{k-1} \phi^i\right)^2\right).
    \end{align}
Similarly, the lag $1$ auto-covariances are
    \begin{align}\label{eq:agg_ar1_acf1}
        \gamma^\ast(1) &= \text{Cov}( (1+\eta B)\epsilon^\ast_T, (1+\eta B)\epsilon^\ast_{T-1}) \nonumber\\
        &= \eta\sigma_{\ast}^2,
    \end{align}
with needed equality to
    \begin{align}\label{eq:agg_ar1_acf1_2}
        \gamma(1) &= \text{Cov}( T(L)\epsilon_t, T(L)\epsilon_{t-k}) \nonumber \\ 
        &=\sigma^2 (1,\phi,\dots,\phi^{k-1})ACA'(1,\phi,\dots,\phi^{k-1}) \nonumber \\
        &= \sigma^2\left(\sum_{j=1}^{k-1} \left(\sum_{i=j}^{k-1}\phi^i\sum_{l=0}^{j-1}\phi^l\right)\right)
    \end{align}
    where 
    \begin{align*}
        C&=\frac{1}{\sigma^2}\text{Cov}\left( (\epsilon_t \dots \epsilon_{t-(2k-2)})',
            (\epsilon_{t-k}, \dots, \epsilon_{t-k-(2k-2)})'
            \right)\\
        &=\begin{pmatrix}
            0_{k\times (k-1)} & 0_{k\times k} \\
            I_{k-1} & 0_{(k-1)\times k}
        \end{pmatrix}
    \end{align*}

Solving the system of equations $\gamma(0)=\gamma^\ast(0),\gamma(1)=\gamma^\ast(1)$ using \eqref{eq:agg_ar1_var}-\eqref{eq:agg_ar1_acf1_2} yields
\begin{align*}
    \sigma_{\ast}^2 &= \sigma^2 \frac{(1,\phi,\dots,\phi^{k-1})AA'(1,\phi,\dots,\phi^{k-1})'}{1+\eta^2}\\
    \eta &= (1+\eta^2)\rho_1,
\end{align*}

where $\rho_1=\frac{\gamma(1)}{\gamma(0)}=\frac{\gamma^\ast(1)}{\gamma^\ast(0)}$ denotes the auto-correlation value at lag $1$.
\begin{proof}[Proof of Lemma~\ref{lm:agg_cov}]
    First, we compute the $h$-step forecasts of the disaggregated series for $h=1,\dots,k$. For the $\text{AR}(1)$ process this can be done recursively and we obtain residuals given by
    \begin{align}
        e_t^{(h)} = \sum_{i=0}^{h-1}\phi^i \epsilon_{t+h-i}.
    \end{align}   
    The corresponding pairwise covariances are quickly computed for $h_1\leq h_2$ by
    \begin{align}
        \text{Cov}\left(e_t^{(h_1)},e_t^{(h_2)}\right) &= \sigma^2 \sum_{l=0}^{h_1-1}\phi^{h_2-h_1+2l}\\
        &= \sigma^2 \phi^{h_2-h_1}\frac{1-\phi^{2h_1}}{1-\phi^2}, \\
    \end{align}
    hence for $\mathbf e_t = \left(e_t^{(1)},\dots,e_t^{(k)}\right)'$ we obtain the covariance matrix on the bottom level
    $\text{Cov}(\mathbf e_t) = \sigma^2 \Phi\Phi'$.
    
    For $y^\ast_T$ we perform a $1$-step forecast, thus ${e^\ast_T}^{(1)}=\epsilon^\ast_{T+1}$ with $\text{Var}({e^\ast_T}^{(1)})=\sigma_{\ast}^2$. To compute $\text{Cov}({e^\ast_T}^{(1)},e_t^{(h)})$, we do as follows. First, write $\epsilon^\ast_{T+1}=y^\ast_{T+1}-\beta y^\ast_T-\eta\epsilon^\ast_T$, then for $T=tk$ and $j=1,\dots,k$ we have 
    \begin{align}
        \text{Cov}(\epsilon^\ast_{T+1}, \epsilon_{tk+j}) &= \sum_{i=0}^{k-1} \text{Cov}(y_{tk+k-i}, \epsilon_{tk+j}) \\
        &= \sum_{i=0}^{k-1} \sum_{l=0}^{tk+k-i}\phi^l \text{Cov}(\epsilon_{tk+k-i-l}, \epsilon_{tk+j}) \\
        &= \sigma^2\sum_{i=0}^{k-j}\phi^i \\
        &= \sigma^2 \frac{1-\phi^{k-j+1}}{1-\phi},
    \end{align} 
    since $\text{Cov}(\epsilon_{tk+k-i-l}, \epsilon_{tk+j})=\sigma^2$ if $l=k-i-j$ and $0$ otherwise. Together, we obtain the temporal cross-covariances of
    \begin{align}
        \text{Cov}({e^\ast_T}^{(1)},e_{tk}^{(h)}) &= \text{Cov}({e^\ast_T}^{(1)},\sum_{i=0}^{h-1}\phi^i\epsilon_{tk+h-i}) \\
        &= \frac{\sigma^2}{1-\phi}\left( \frac{1-\phi^h}{1-\phi} -  \phi^{k-h+1}\frac{1-\phi^{2h}}{1-\phi^2}\right),
    \end{align}  
    hence the cross-covariance vector is given by
    \begin{align}
        \text{Cov}(e^\ast_T,\mathbf e_{tk}) &= \sigma^2(1,\dots,1) \tilde\Phi\tilde\Phi.
    \end{align}   
\end{proof}

\begin{proof}[Proof of Theorem~\ref{thm:mint_bu}]
    The minimizer of Eq.~\eqref{eq:mint} is given by $G^\ast=(S'W_1^{-1}S)^{-1}S'W_1^{-1}$. First, note that 
    \begin{align}
        W_1^{-1}S = \begin{pmatrix}
            \mathbf 0_k'\\
            (\sigma^2\Phi\Phi')^{-1}
        \end{pmatrix},
    \end{align}
    due to $\text{Cov}(e^\ast_T,\mathbf e_{tk}) = \sigma^2\mathbf 1_k' \Phi\Phi'$. Then the minimizing $G^\ast$ matrix is obtained to be $G^\ast=(\mathbf 0_k~\mathbf I_k)$ and hence
    \begin{align}
        SG^\ast = \begin{pmatrix}
            0 & \mathbf 1_k' \\
            \mathbf 0_k & \mathbf I_k
        \end{pmatrix},
    \end{align}
    which is exactly the bottom-up forecast for the aggregated series.
\end{proof}

\section{Additional Results}
\begin{table}
\centering
\caption{\label{tab:h=1, k=(1,4), sigma.sq=1, auto=TRUE}Mean rMSE per buckets of $\phi$ for $ h=1, k\in\{4,1\}, \sigma^2=1 $ and auto-selected models. The standard errors are given in parenthesis.}
\centering
\resizebox{\ifdim\width>\linewidth\linewidth\else\width\fi}{!}{
\begin{tabular}[t]{lllrrrrrr}
\toprule
\multicolumn{3}{c}{ } & \multicolumn{3}{c}{Training rMSE} & \multicolumn{3}{c}{Test rMSE} \\
\cmidrule(l{3pt}r{3pt}){4-6} \cmidrule(l{3pt}r{3pt}){7-9}
Level & n & Recon. Type & {}[-0.9,-0.5] & (-0.5,0.5] & (0.5,0.9] & {}[-0.9,-0.5] & (-0.5,0.5] & (0.5,0.9]\\
\midrule
 &  & Bottom-Up & -0.02 (0.01) & 0.03 (0.01) & -0.08 (0.02) & \textbf{-0.05 (0.03)} & -0.01 (0.01) & -0.05 (0.03)\\

 &  & Full Cov. & \textbf{-0.10 (0.01)} & - & \textbf{-0.17 (0.01)} & 0.11 (0.03) & - & -0.01 (0.04)\\

 &  & Spectral & -0.09 (0.01) & \textbf{-0.03 (0.00)} & -0.14 (0.01) & 0.01 (0.02) & 0.01 (0.01) & -0.01 (0.05)\\

 & \multirow{-4}{*}{ 20} & OLS & -0.03 (0.00) & -0.01 (0.00) & -0.06 (0.00) & -0.04 (0.00) & \textbf{-0.02 (0.00)} & \textbf{-0.07 (0.01)}\\
\cmidrule{2-9}
 &  & Bottom-Up & -0.04 (0.01) & 0.01 (0.00) & -0.09 (0.01) & \textbf{-0.08 (0.01)} & \textbf{-0.04 (0.01)} & -0.12 (0.02)\\

 &  & Full Cov. & \textbf{-0.08 (0.01)} & - & \textbf{-0.14 (0.01)} & -0.04 (0.01) & - & \textbf{-0.12 (0.01)}\\

 &  & Spectral & -0.06 (0.00) & \textbf{-0.02 (0.00)} & -0.12 (0.01) & -0.04 (0.01) & -0.02 (0.01) & -0.11 (0.02)\\

 & \multirow{-4}{*}{ 50} & OLS & -0.03 (0.00) & -0.01 (0.00) & -0.05 (0.00) & -0.03 (0.00) & -0.02 (0.00) & -0.06 (0.00)\\
\cmidrule{2-9}
 &  & Bottom-Up & -0.05 (0.00) & 0.00 (0.00) & -0.10 (0.01) & \textbf{-0.08 (0.01)} & \textbf{-0.03 (0.00)} & -0.13 (0.01)\\

 &  & Full Cov. & \textbf{-0.07 (0.00)} & - & \textbf{-0.13 (0.00)} & -0.06 (0.01) & - & \textbf{-0.14 (0.01)}\\

 &  & Spectral & -0.06 (0.00) & \textbf{-0.01 (0.00)} & -0.12 (0.00) & -0.05 (0.01) & -0.02 (0.00) & -0.13 (0.01)\\

\multirow{-12}{*}[1\dimexpr\aboverulesep+\belowrulesep+\cmidrulewidth]{ Level 1} & \multirow{-4}{*}{ 100} & OLS & -0.03 (0.00) & -0.01 (0.00) & -0.05 (0.00) & -0.03 (0.00) & -0.01 (0.00) & -0.06 (0.00)\\
\cmidrule{1-9}
 &  & Bottom-Up & 0.00 (0.00) & 0.00 (0.00) & 0.00 (0.00) & \textbf{ 0.00 (0.00)} & \textbf{ 0.00 (0.00)} & \textbf{ 0.00 (0.00)}\\

 &  & Full Cov. & -0.02 (0.01) & - & \textbf{-0.07 (0.01)} & 0.12 (0.02) & - & 0.08 (0.02)\\

 &  & Spectral & \textbf{-0.02 (0.00)} & \textbf{-0.02 (0.00)} & -0.03 (0.01) & 0.04 (0.01) & 0.03 (0.01) & 0.05 (0.01)\\

 & \multirow{-4}{*}{ 20} & OLS & 0.00 (0.00) & -0.01 (0.00) & 0.05 (0.01) & 0.01 (0.00) & 0.03 (0.01) & 0.12 (0.03)\\
\cmidrule{2-9}
 &  & Bottom-Up & 0.00 (0.00) & 0.00 (0.00) & 0.00 (0.00) & \textbf{ 0.00 (0.00)} & \textbf{ 0.00 (0.00)} & \textbf{ 0.00 (0.00)}\\

 &  & Full Cov. & \textbf{-0.02 (0.00)} & - & \textbf{-0.04 (0.00)} & 0.03 (0.00) & - & 0.02 (0.01)\\

 &  & Spectral & -0.01 (0.00) & \textbf{-0.01 (0.00)} & -0.02 (0.00) & 0.01 (0.00) & 0.02 (0.00) & 0.02 (0.01)\\

 & \multirow{-4}{*}{ 50} & OLS & 0.00 (0.00) & 0.00 (0.00) & 0.04 (0.01) & 0.01 (0.00) & 0.02 (0.00) & 0.09 (0.01)\\
\cmidrule{2-9}
 &  & Bottom-Up & 0.00 (0.00) & 0.00 (0.00) & 0.00 (0.00) & \textbf{ 0.00 (0.00)} & \textbf{ 0.00 (0.00)} & \textbf{ 0.00 (0.00)}\\

 &  & Full Cov. & \textbf{-0.01 (0.00)} & - & \textbf{-0.02 (0.00)} & 0.01 (0.00) & - & 0.01 (0.01)\\

 &  & Spectral & 0.00 (0.00) & \textbf{-0.01 (0.00)} & -0.01 (0.00) & 0.01 (0.00) & 0.01 (0.00) & 0.01 (0.00)\\

\multirow{-12}{*}[1\dimexpr\aboverulesep+\belowrulesep+\cmidrulewidth]{ Level 2} & \multirow{-4}{*}{ 100} & OLS & 0.00 (0.00) & 0.00 (0.00) & 0.04 (0.00) & 0.01 (0.00) & 0.01 (0.00) & 0.08 (0.01)\\
\cmidrule{1-9}
 &  & Bottom-Up & -0.02 (0.01) & 0.02 (0.01) & -0.08 (0.01) & \textbf{-0.04 (0.01)} & \textbf{-0.02 (0.01)} & \textbf{-0.07 (0.03)}\\

 &  & Full Cov. & \textbf{-0.06 (0.01)} & - & \textbf{-0.16 (0.01)} & 0.09 (0.02) & - & -0.03 (0.03)\\

 &  & Spectral & -0.06 (0.00) & \textbf{-0.03 (0.00)} & -0.14 (0.01) & 0.01 (0.01) & 0.00 (0.01) & -0.03 (0.04)\\

 & \multirow{-4}{*}{ 20} & OLS & -0.02 (0.00) & -0.01 (0.00) & -0.06 (0.00) & -0.02 (0.00) & -0.02 (0.00) & -0.06 (0.00)\\
\cmidrule{2-9}
 &  & Bottom-Up & -0.02 (0.00) & 0.01 (0.00) & -0.08 (0.01) & \textbf{-0.05 (0.01)} & \textbf{-0.04 (0.01)} & -0.11 (0.02)\\

 &  & Full Cov. & \textbf{-0.05 (0.00)} & - & \textbf{-0.13 (0.01)} & -0.02 (0.01) & - & \textbf{-0.12 (0.01)}\\

 &  & Spectral & -0.04 (0.00) & \textbf{-0.01 (0.00)} & -0.12 (0.01) & -0.02 (0.00) & -0.02 (0.01) & -0.11 (0.01)\\

 & \multirow{-4}{*}{ 50} & OLS & -0.01 (0.00) & -0.01 (0.00) & -0.05 (0.00) & -0.02 (0.00) & -0.01 (0.00) & -0.06 (0.00)\\
\cmidrule{2-9}
 &  & Bottom-Up & -0.03 (0.00) & 0.00 (0.00) & -0.09 (0.01) & \textbf{-0.05 (0.00)} & \textbf{-0.02 (0.00)} & -0.13 (0.01)\\

 &  & Full Cov. & \textbf{-0.04 (0.00)} & - & \textbf{-0.12 (0.00)} & -0.03 (0.00) & - & \textbf{-0.13 (0.01)}\\

 &  & Spectral & -0.03 (0.00) & \textbf{-0.01 (0.00)} & -0.11 (0.00) & -0.03 (0.00) & -0.01 (0.00) & -0.12 (0.01)\\

\multirow{-12}{*}[1\dimexpr\aboverulesep+\belowrulesep+\cmidrulewidth]{ Overall} & \multirow{-4}{*}{ 100} & OLS & -0.01 (0.00) & 0.00 (0.00) & -0.04 (0.00) & -0.02 (0.00) & -0.01 (0.00) & -0.05 (0.00)\\
\bottomrule
\end{tabular}}
\end{table}

\section{Additional Plots}\label{app:plots}

\begin{figure}[!ht]
    \includegraphics[width=\textwidth]{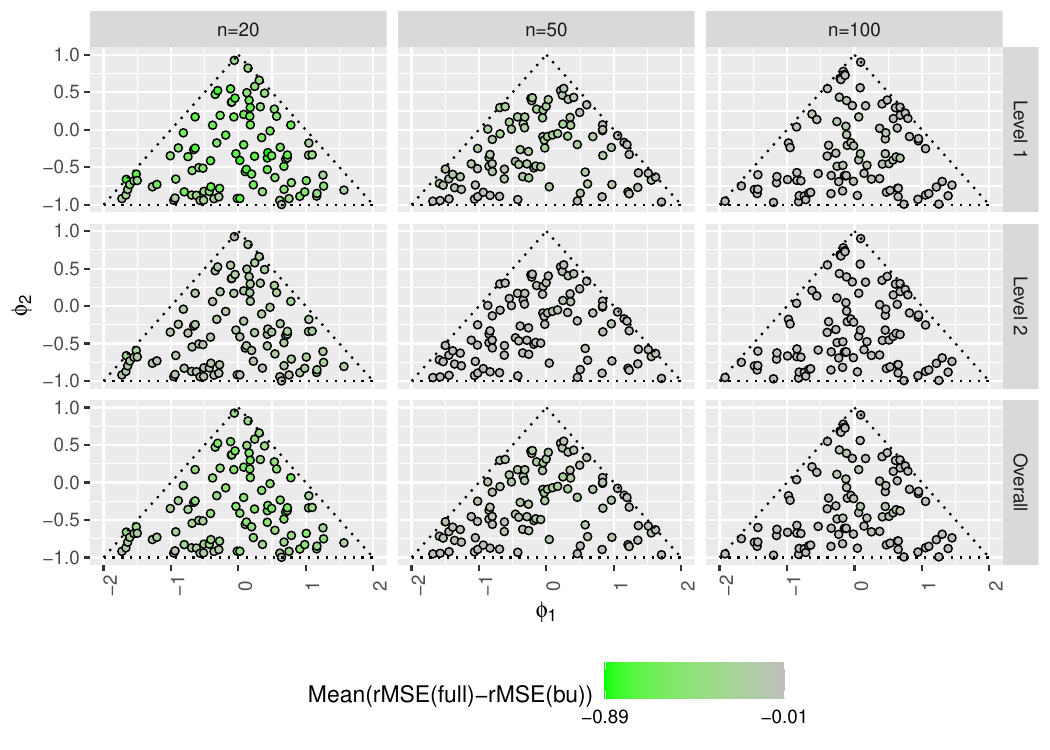}
    \caption{In-sample mean rMSE differences of the full covariance matrix and bottom-up reconciliation for $h=1,k\in\{4,1\},\sigma^2=1$ and fixed-order models.}
    \label{fig:ar2_1_14_1_train}
\end{figure}

\begin{figure}[!ht]
    \includegraphics[width=\textwidth]{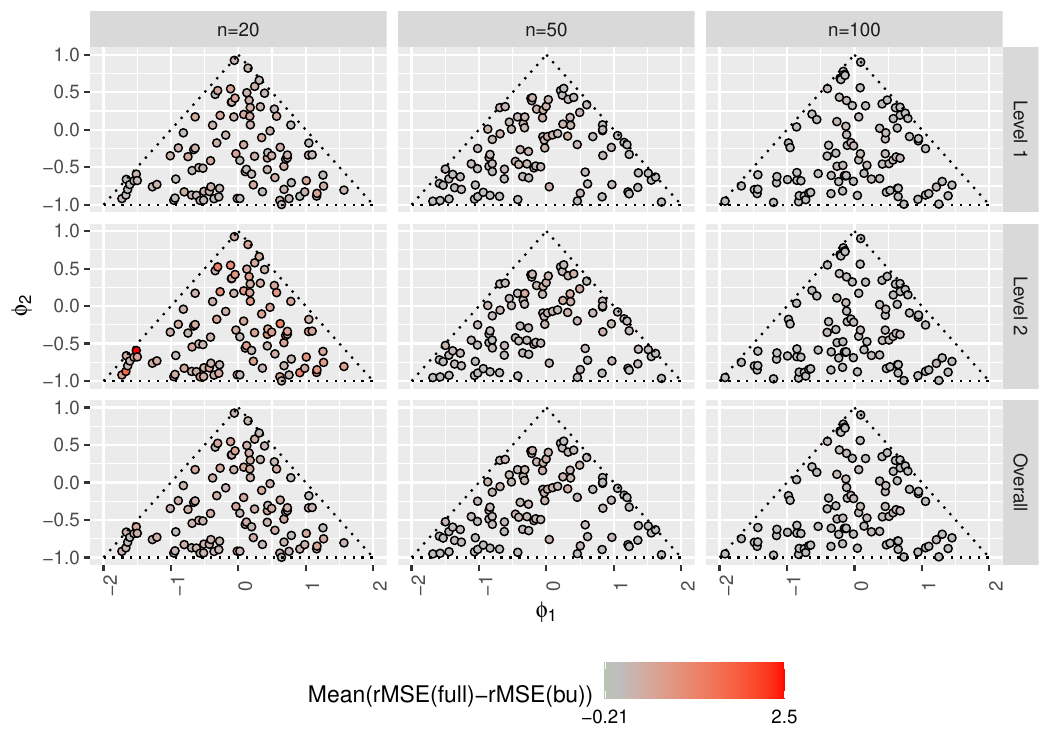}
    \caption{Out-of-sample mean rMSE differences of the full covariance matrix and bottom-up reconciliation for $h=1,k=(1,4),\sigma^2=1$ and fixed-order models.}
    \label{fig:ar2_1_14_1_test}
\end{figure}

\begin{figure}[!ht]
    \includegraphics[width=\textwidth]{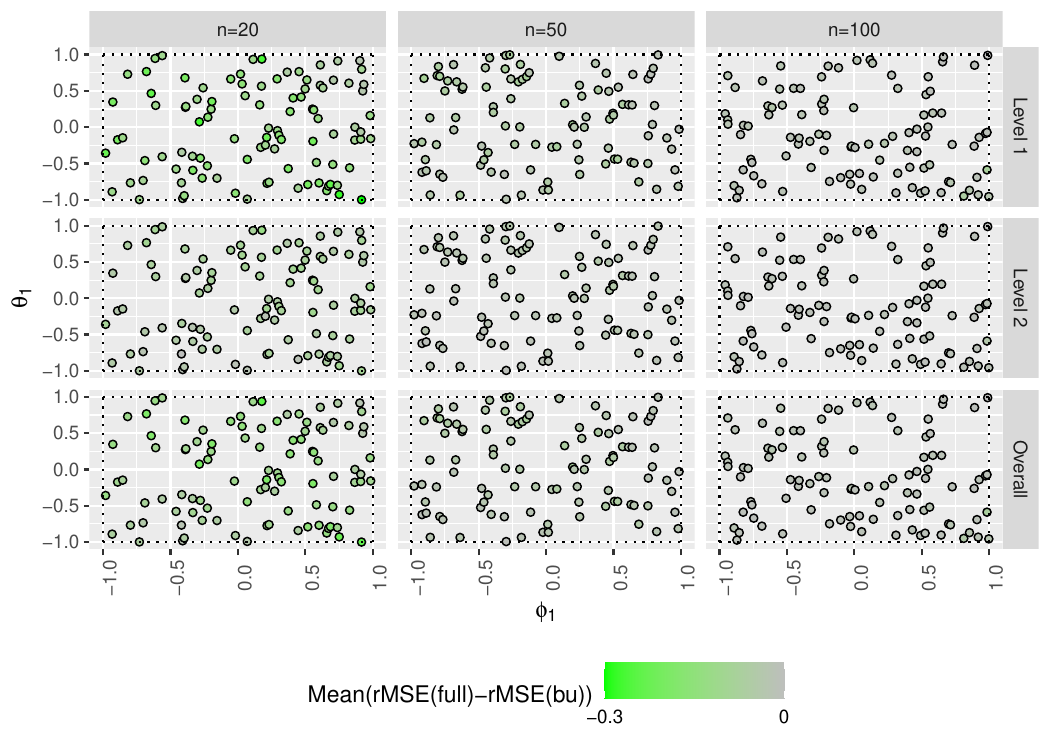}
    \caption{In-sample mean rMSE differences of the full covariance matrix and bottom-up reconciliation for $h=1,k\in\{4,1\},\sigma^2=1$ and fixed-order models.}
    \label{fig:arma11_1_14_1_train}
\end{figure}

\begin{figure}[!ht]
    \includegraphics[width=\textwidth]{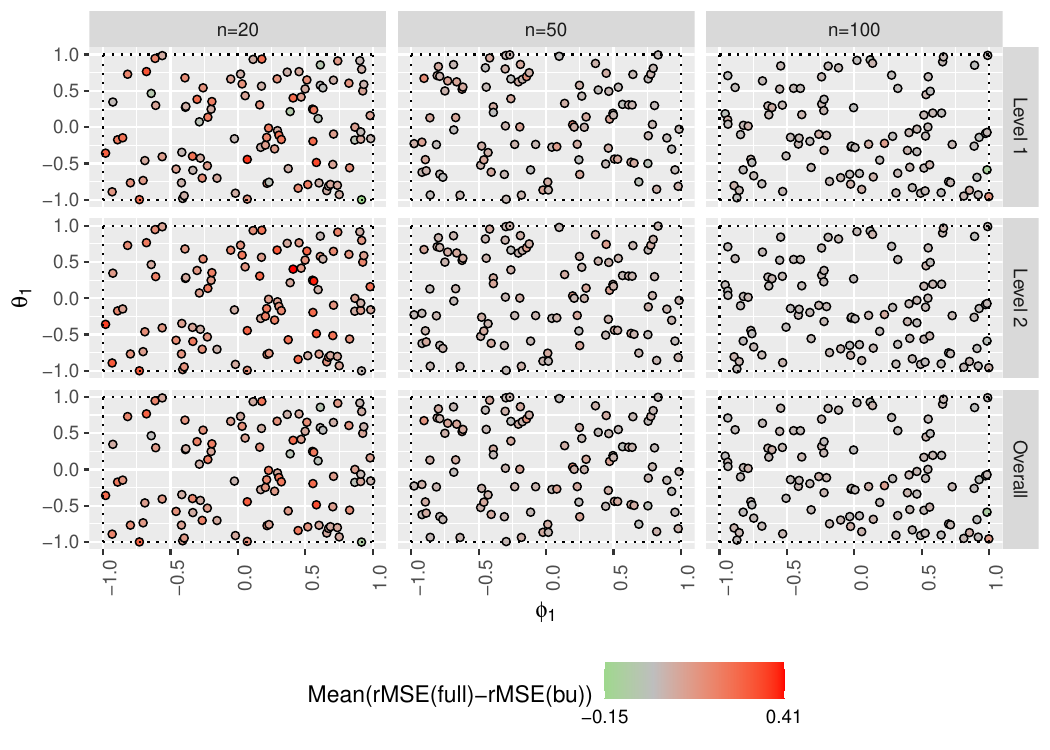}
    \caption{Out-of-sample mean rMSE differences of the full covariance matrix and bottom-up reconciliation for $h=1,k=(1,4),\sigma^2=1$ and fixed-order models.}
    \label{fig:arma11_1_14_1_test}
\end{figure}

\bibliographystyle{apalike}
\bibliography{refs}
\end{document}